\newtheorem{thm}{Theorem}[section]
\newtheorem{lem}{Lemma}[section]
\theoremstyle{definition}
\newtheorem{dfn}{Definition}[section]
\theoremstyle{remark}
\numberwithin{equation}{section}
\begin{document}

\title{Energy functions of general dimensional diamond crystals based on the Kitaev model}
\author{Akito Tatekawa}
\address{}
\email{akitot@meiji.ac.jp}
\date{}
\maketitle
\setlength{\headheight}{14pt}

\begin{abstract}
The purpose of this paper is to extend the Kitaev model to a general dimensional diamond crystal. 
 We define the Hamiltonian by using representations of Clifford algebras. 
Then we compute the energy functions. 
We show that the energy functions are  identified with those appearing in the tight binding model. 
\end{abstract}

\section{Introduction}
The Kitaev model is an exactly solvable model of a spin $\frac{1}{2}$ system on the honeycomb lattice. 
This model was extensively studied by A.~Kitaev in \cite{Kitaev}. 
The purpose of this paper is to extend the Kitaev model to the $d$-dimensional diamond crystal $\Delta_d$ for any $d\geq 2$ and to compute the energy functions. 
The $d$-dimensional diamond crystal $\Delta_d$ was defined by T.~O'Keeffe \cite{Keeffe}. 
The honeycomb lattice can be treated as a two-dimensional diamond crystal $\Delta_2$. 
In the case where $d=3$, $\Delta_3$ is the diamond crystal in $\mathbb{R}^3$. 
The Kitaev model of $\Delta_3$ was investigated by S.~Ryu \cite{Shinsei}. 


We define the Majorana operators  by using  irreducible representations of the Clifford algebras. 
The creation operators $a_i^{\dagger}$ and  the annihilation operators $a_i$ are defined by means of the Majorana operators. 
There is an action of the root lattice of type $A_d$ on $\Delta_d$ and the quotient space is a graph denoted by $X_0$. 
We call $X_0$ the base graph of the diamond crystal $\Delta_d$. 
The base graph was studied by T.~Sunada \cite{Sunada} in the framework of topological crystallography. 
We effectively use the base graph $X_0$ to describe the Hamiltonian for $\Delta_d$. 
We compute the energy functions of the Kitaev model of $\Delta_d$ by applying a method based on  the Fourier transform developed by Kato and Richard \cite{Kato}. 

The energy functions for crystal lattices in quantum mechanics are described by the Schr\"{o}dinger equation with a periodic potential. 
However, it is difficult to solve the Schr\"{o}dinger equation analytically. 
We consider the tight binding model using the base graph of $\Delta_d$. 
Then, we compare the tight binding model and the Kitaev model of $\Delta_d$.

The paper is organized in the following way. 
In section 2, we recall the definition of the $d$-dimensional diamond crystal $\Delta_d$. 
In section 3, we review the Kitaev model for the honeycomb lattice $\Delta_2$. 
In section 4, we define the space of states  on which the Majorana operators act. 
In section 5, we formulate the Hamiltonian of the Kitaev model of  $\Delta_d$. 
In section 6, we compute the energy functions of  $\Delta_d$ by applying the Fourier transform. 
In section 7, we describe  zeros of the energy functions and energy gaps. 
In section 8, we identify  the energy functions of  the Kitaev model with those appearing in  the tight binding model.

\section{d-dimensional diamond crystal $\Delta_d$}
Following T.~Sunada \cite{Sunada}, we recall the definition of the $d$-dimensional diamond crystal $\Delta_d$ (see also T.~O'Keeffe \cite{Keeffe}). 

We set 
\begin{align*}
    W=\{\sum_{i=1}^{d+1}y_i\bm{e}_i\in\mathbb{R}^{d+1}\mid \sum_{i=1}^{d+1}y_i=0,\ y_i\in \mathbb{R}\}. 
\end{align*}
We define the root lattice $A_d$ as 
\begin{align}
    \label{eq:root}
    A_d=\{\sum_{i=1}^{d}n_i \alpha_i\in W\mid \alpha_i=\bm{e}_i-\bm{e}_{d+1},\ n_i\in\mathbb{Z}\}. 
\end{align}
We set $p=\frac{1}{d+1}\sum_{i=1}^{d}\alpha_i$ and
\begin{align*}
    A_d+p=\{p+\sum_{i=1}^{d}n_i\alpha_i\in W\mid \alpha_i=\bm{e}_i-\bm{e}_{d+1},\ n_i\in\mathbb{Z}\}. 
\end{align*}
Here, we denote the standard basis of $\mathbb{R}^{d+1}$ by $\{\bm{e}_i\}_{1\leq i\leq d+1}$. 
    \begin{dfn}
        We define the $d$-dimensional diamond crystal denoted by $\Delta_d$ as a spatial graph in the following way. 
        \label{dfn1}
        
        (1) The set of vertices of $\Delta_d$ is defined as the disjoint union $V(\Delta_d)=A_d\sqcup (A_d+p)$. 

        (2) The set of edges $E(\Delta_d)$ consists of the segments connecting $a' \in A_d+p$ and $a'-p \in A_d$, and the segments connecting $a' \in A_d+p$ and $a'+\alpha_i -p \in A_d$ for $1\leq i\leq d$. 
        We suppose that the edges of $E(\Delta_d)$ are unoriented. 
    \end{dfn}
From (1) and (2), it follows that $\Delta_d$ is a bipartite graph. 
The lattice group $\Gamma_{A_d}$ is generated by the translations $t_{\alpha_i}$ for $1\leq i\leq d$, where  $t_{\alpha_i}$ is defined by $t_{\alpha_i}(x)=x+\alpha_i$ for $x\in \mathbb{R}^d$. 
For example, in the case $d=2$, the $2$-dimensional diamond crystal $\Delta_2$ is the honeycomb lattice, and in the case $d=3$, $\Delta_3$ is the $3$-dimensional diamond crystal as shown in Figure \ref{fig:diamond}. 
There are $d+1$ edges meeting at each vertex of $\Delta_d$. 

\begin{figure}
    \begin{center}
    \includegraphics[width=140mm]{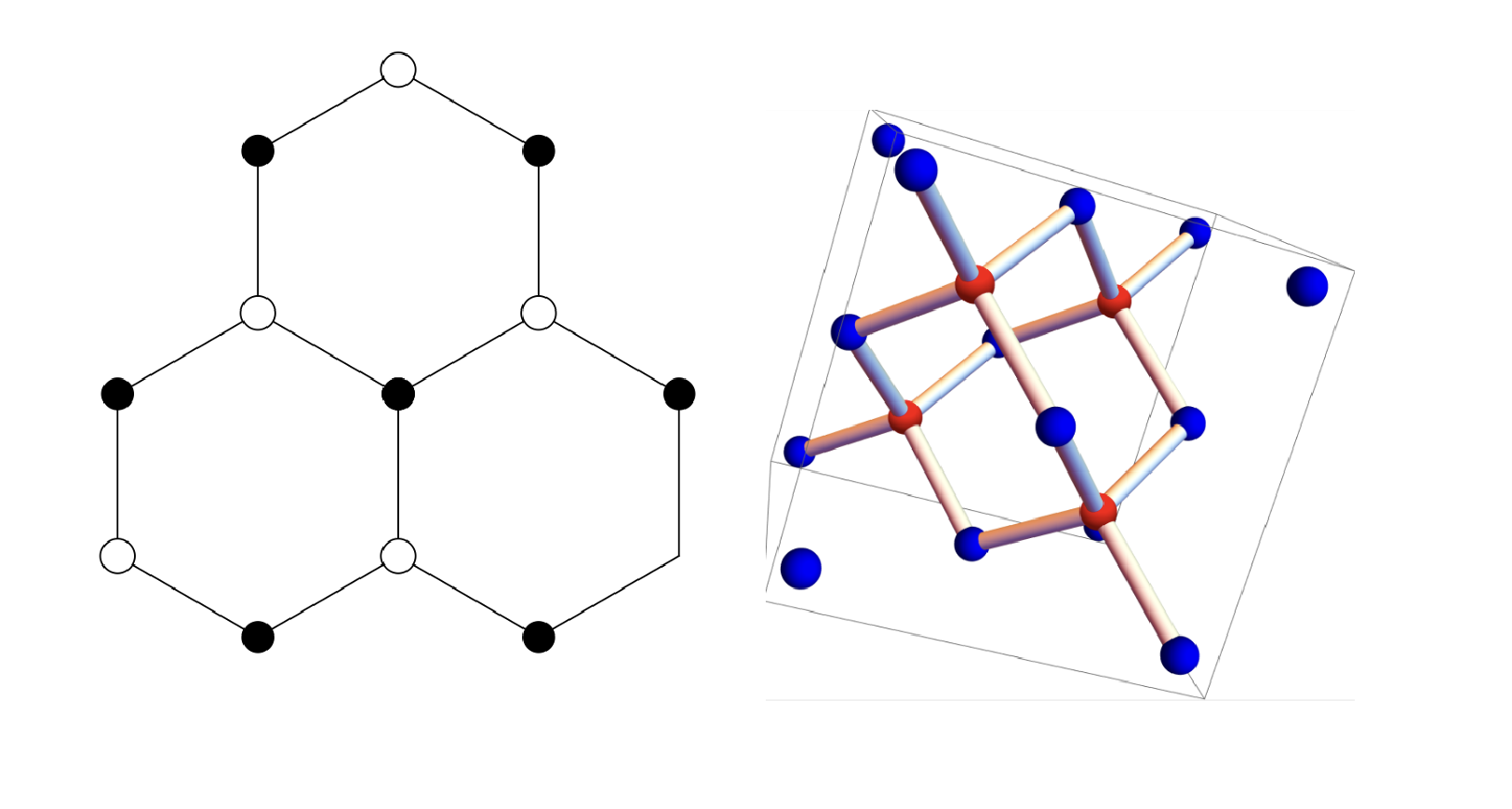}
    \caption{The honeycomb lattice $\Delta_2$ and the diamond crystal $\Delta_3$}
    \label{fig:diamond}
    \end{center}
\end{figure}

\section{The Kitaev model}
We review the definition of the Kitaev model following the article \cite{Kitaev}. 
It is a statistical mechanics model on the honeycomb lattice. 
There are three directions of edges meeting at each vertex. 
We call these directions $x$-link, $y$-link, and $z$-link as shown in Figure \ref{fig:links}. 
Let $V$ be a 2-dimensional vector space over $\mathbb{C}$ with basis $|0\rangle$ and $|1\rangle$. 
We set $\widetilde{M} = V\otimes V$. 
\begin{figure}[htbp]
    \begin{center}
    \includegraphics[width=100mm]{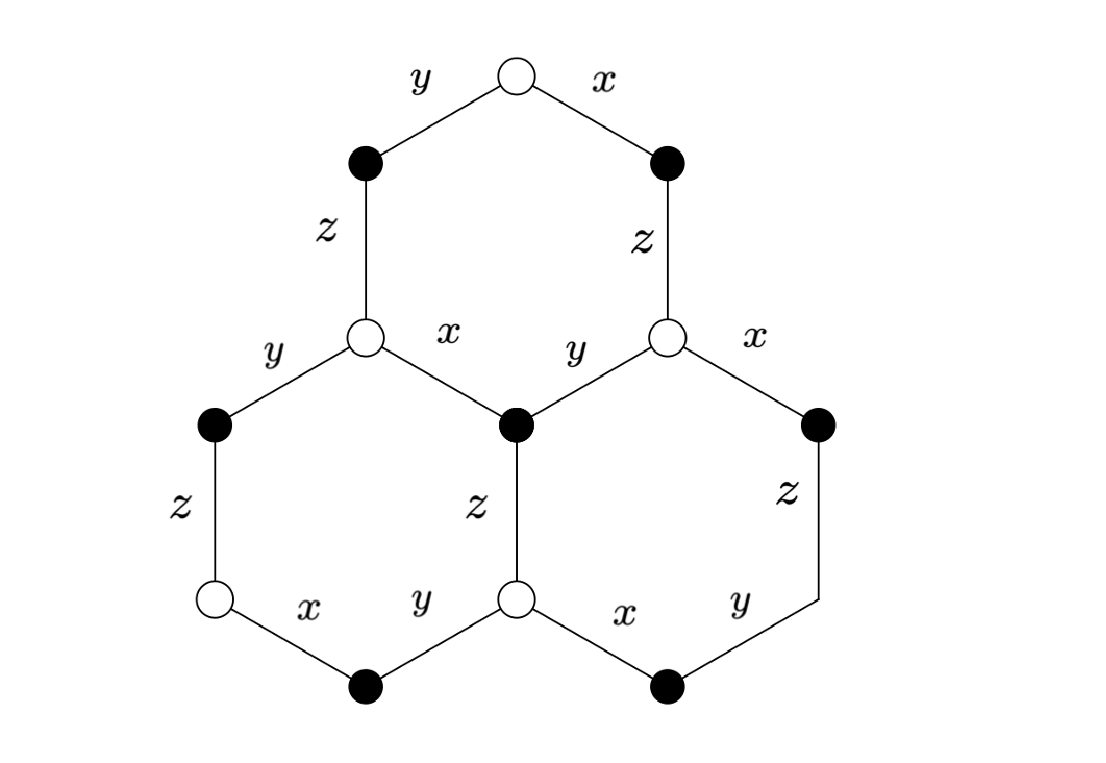}
    \caption{Links of the honeycomb lattice}
    \label{fig:links}
    \end{center}
\end{figure}
We set $|ij\rangle=|i\rangle\otimes|j\rangle$, for $i,\ j=0,\ 1$. 
We define the creation operators and the annihilation operators $a_1^{\dagger}$, $a_2^{\dagger}$, $a_1$, and $a_2$  
by matrices as bellow. 
We assume that $a_1|00\rangle=0$ and  $a_2|00\rangle=0$. 
We define the operators $a_1^{\dagger}$, $a_2^{\dagger}$, $a_1$, and $a_2$ by
\begin{align*}
    a_1=
    \begin{pmatrix}
        0&1&0&0\\
        0&0&0&0\\
        0&0&0&1\\
        0&0&0&0
    \end{pmatrix},\
    a_2=
    \begin{pmatrix}
        0&0&1&0\\
        0&0&0&-1\\
        0&0&0&0\\
        0&0&0&0
    \end{pmatrix},\\
    a_1^{\dagger}=
    \begin{pmatrix}
        0&0&0&0\\
        1&0&0&0\\
        0&0&0&0\\
        0&0&1&0
    \end{pmatrix},\
    a_2^{\dagger}=
    \begin{pmatrix}
        0&0&0&0\\
        0&0&0&0\\
        1&0&0&0\\
        0&-1&0&0
    \end{pmatrix}. 
\end{align*}
with respect to the basis $|00\rangle, |01\rangle, |10\rangle, |11\rangle$. 
The operators $a_i^{\dagger}$ and $a_i$ satisfy the anticommutation relations
    \begin{align*}
        \{a_i,a_j\}=\{a_i^{\dagger},a_j^{\dagger}\}=0,\ \{a_i,a_j^{\dagger}\}=\delta_{ij}
    \end{align*}
    where $\{x,y\}=xy+yx$. 
    The Majorana operators  $c_1, c_2, c_3, c_4$ are defined as
    \begin{align*}
    c_1=a_1+a_1^{\dagger},\ \ c_2=\frac{1}{\sqrt{-1}}(a_1-a_1^{\dagger}),\ \ c_3=a_2+a_2^{\dagger},\ \ c_4=\frac{1}{\sqrt{-1}}(a_2-a_2^{\dagger}). 
\end{align*}
In section 4, we define the Majorana operators  for any dimensions systematically by using the Clifford algebras. 

The spin operators $\sigma^{x},\sigma^{y}$ and $\sigma^{z}$  are defined as 
\begin{align*}
    \sigma^{x}=-\sqrt{-1}c_{1}c_{4},\ \ \sigma^{y}=\sqrt{-1}c_{2}c_{4},\ \ \sigma^{z}=-\sqrt{-1}c_{3}c_{4}. 
\end{align*}
To each vertex $v$ of $\Delta_2$ we associate the above $\widetilde{M}$ and denote it by $\widetilde{M}_v$. 
The operator $c_4^v$ is the action of $c_4$ on $\widetilde{M}_v$ and  $Id$ on the other components of $\bigotimes_{v\in V(\Delta_2)}\widetilde{M}_v$. 
To three directions $x$, $y$, and $z$ of the edges meeting at $v$ we associate the operators $c_1^v$, $c_2^v$, and $c_3^v$, which are the action of $c_1$, $c_2$ and $c_3$ on $\widetilde{M}_v$ and  $Id$ on the other components  of $\bigotimes_{v\in V(\Delta_2)}\widetilde{M}_v$  (see Figure \ref{fig:Majoran}). 
The spin operators $\sigma^{x}_v,\sigma^{y}_v$, and $\sigma^{z}_v$  are defined as 
\begin{align*}
    \sigma^{x}_v=-\sqrt{-1}c_{1}^vc_{4}^v,\ \ \sigma^{y}_v=\sqrt{-1}c_{2}^vc_{4}^v,\ \ \sigma^{z}_v=-\sqrt{-1}c_{3}^vc_{4}^v. 
\end{align*}
\begin{figure}[htbp]
\begin{center}
\includegraphics[width=120mm]{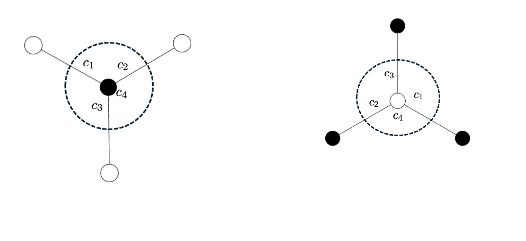}
\caption{Majorana operators of the honeycomb lattice}
\label{fig:Majoran}
\end{center}
\end{figure}
The action of $D$ on the space $\widetilde{M}$ is defined by $D=-c_1c_2c_3c_4$. 
We define the operator $\widetilde{D}$ acting on the space $\bigotimes_{v\in V(\Delta_2)}\widetilde{M}_v$ as
\begin{align*}
   &\widetilde{D}(\bigotimes_{v\in V(\Delta_2)}u_v)=(\bigotimes_{v\in V(\Delta_2)}Du_v),\ u_v\in \widetilde{M}_v. 
\end{align*}
The subspace $M\subset\widetilde{M}$ is defined by
\begin{align*}
   M= \{u\in \widetilde{M}\mid Du=u\}. 
\end{align*}
Then, $M$ is a 2-dimensional vector space over $\mathbb{C}$. 
The subspace $M(\Delta_2)\subset\bigotimes_{v\in V(\Delta_2)}\widetilde{M}_{v_u}$ is defined by
\begin{align*}
   M(\Delta_2)= \{u\in \bigotimes_{v\in V(\Delta_2)}\widetilde{M}_{v} \mid \widetilde{D}u=u\}. 
\end{align*}

The linear transformations $\sigma^x,\ \sigma^y$, and $\sigma^z$ are expressed by the Pauli spin matrices as 
\begin{align*}
    \sigma^x= 
    \begin{pmatrix}
        0&1\\
        1&0
    \end{pmatrix},\
    \sigma^y= 
    \begin{pmatrix}
        0&-\sqrt{-1}\\
        \sqrt{-1}&0
    \end{pmatrix},\
    \sigma^z=
    \begin{pmatrix}
        1&0\\
        0&-1
    \end{pmatrix}
\end{align*}
with respect to the basis $|00\rangle$ and $|11\rangle$ of $M$. 
We define $E_x$ as the set of unoriented edges of $\Delta_2$ in the direction $x$-link. 
For $y$ link and $z$  link we define $E_y$ and $E_z$ in the same way by replacing $x$-link with $y$-link and $z$-link respectively. 

The Kitaev model is defined by the Hamiltonian
$$
    H=-J_x\sum_{(v,v')\in E_x(\Delta_2)}\sigma_{v}^{x}\sigma_{v'}^{x}-J_y\sum_{(v,v')\in E_y(\Delta_2)}\sigma_{v}^{y}\sigma_{v'}^{y}-J_z\sum_{(v,v')\in E_z(\Delta_2)}\sigma_{v}^{z}\sigma_{v'}^{z}\\
$$
where $J_x, J_y, J_z\in \mathbb{R}$. 
In this way, the model can be identified with the Kitaev honeycomb model formulated in terms of Pauli spin matrices. 
Then the Hamiltonian $H$ acts on $M(\Delta_2)$. 
\begin{figure}[htbp]
\begin{center}
\includegraphics[width=70mm]{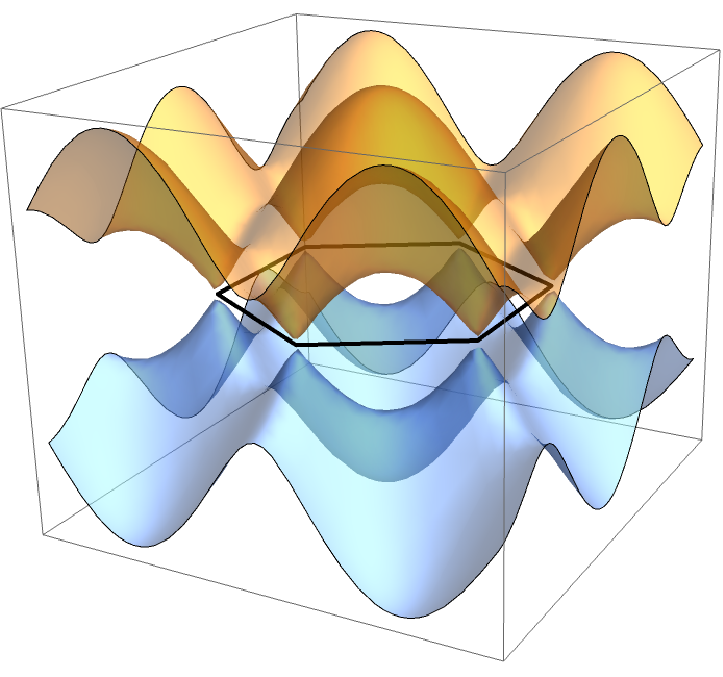}
\caption{Minimum ground state energy functions of the honeycomb lattice}
\label{fig:en}
\end{center}
\end{figure}
For the $\mathbb{Z}$-basis $\alpha_1, \alpha_2$ of $A_2$, we choose the vectors $b_1,\ b_2$ such that $(b_i,\alpha_j)=2\pi\delta_{ij}$ where $(\ ,\ )$ is the Euclidean inner product. 
We set $\bm{q}=k_1b_1+k_2b_2, \ k_1,\ k_2\in \mathbb{R}$. 

The minimum ground state energy functions of the Kitaev model are expressed as
$$
    \xi(\bm{q})=\pm|f(\bm{q})|
$$
with
$$
f(\bm{q})=2(J_xe^{\sqrt{-1}(\bm{q},\alpha_1)}+J_ye^{\sqrt{-1}(\bm{q},\alpha_2)}+J_z). \\
$$
The graph of these spectra as functions  in $q$ is shown in Figure \ref{fig:en} when the parameters satisfy $J_x=J_y=J_z=J$.

\section{Representations of Clifford algebras and $\Delta_d$}

The Clifford  algebra  $Cl_k$ is an algebra over $\mathbb{C}$  generated by $1,c_{1},\cdots, c_{k}$
with relations
 $$
 \{c_i,c_j\}=2\delta_{ij}. 
 $$
In the case $k=2m$,  there is an isomorphism $Cl_{2m}\cong M_{2^m}(\mathbb{C})$ as algebras,  and $Cl_{k}$ has a unique irreducible representation. 
In the case $k=2m+1$,  there is an isomorphism $Cl_{2m+1}\cong M_{2^m}(\mathbb{C})\oplus M_{2^m}(\mathbb{C})$ as algebras, and  $Cl_{k}$ has two irreducible representations. 
This result is presented, for example, in Lawson and Michelsohn \cite{Lawson} Chapter 1. 
When $k=2m+1$, $Cl_{k}$ has two   irreducible  representations  $S_{+}$ and $S_{-}$. 
 For $(\sqrt{-1})^{m}c_1\cdots c_{2m+1}$, $S_{+}$ is  the $+1$ eigenspace  and $S_{-}$ is the $-1$ eigenspace. 
Then, we select $S_{+}$ as an   irreducible  representation of $Cl_{k}$. 
In this way, we fix an irreducible representation of $Cl_k$, which is denoted by $\widetilde{M}_k$. 
To investigate eigenvalue problems of Hamiltonians, it is convenient to introduce the following  creation and  annihilation operators. 

In the case  $k=2m$, we define  the creation operators $a_i^{\dagger}$ for $1 \leq i \leq m$ and the annihilation operators $a_i$ for $1 \leq i \leq m$  by  $c_1, c_2, \cdots, c_{2m}$ as 
\begin{align*}
    \begin{cases}
        a_{i}=\frac{1}{2}(c_{2i-1}+\sqrt{-1}c_{2i})\\
        a_{i}^{\dagger}=\frac{1}{2}(c_{2i-1}-\sqrt{-1}c_{2i}). 
    \end{cases}
\end{align*}

In the case $k=2m+1$, we define  the creation  operators $a_i^{\dagger}$ for $1 \leq i \leq m$, and the annihilation operators $a_i$ for $1 \leq i \leq m$ and $b$ by  $c_1$, $c_2, \cdots, c_{2m+1}$ as 
\begin{align*}
    \begin{cases}
        a_{i}=\frac{1}{2}(c_{2i-1}+\sqrt{-1}c_{2i})\\
        a_{i}^{\dagger}=\frac{1}{2}(c_{2i-1}-\sqrt{-1}c_{2i})\\
        b=c_{2m+1}. 
    \end{cases}
\end{align*}
 The operators $a_i, a_i^{\dagger}, b$ satisfy the relations 
 \begin{align}
        \label{eq:anticom2}
        \{a_i,a_j\}=\{a_i^{\dagger},a_j^{\dagger}\}=\{a_j,b\}=\{a_j^{\dagger},b\}=0,\ \{a_i,a_j^{\dagger}\}=\delta_{ij},\ \{b,b\}=2. 
 \end{align}
   There exists a non zero vector  $|vac\rangle$ in $\widetilde{M}_k$ such that  $a_i|vac\rangle=0$ for $1\leq i\leq \lfloor\frac{k}{2}\rfloor$ and $b|vac\rangle=|vac\rangle$. 
   The vector space $\widetilde{M}_k$ is  spanned by  $|vac\rangle$ and $a_{l_1}^{\dagger}\cdots a_{l_s}^{\dagger}|vac\rangle$ for $1\leq l_1<\cdots<l_s\leq \lfloor\frac{k}{2}\rfloor$. 
        The action of $a_i$, $a_{i}^{\dagger}$ and $b$ on $a_{l_1}^{\dagger}\cdots a_{l_s}^{\dagger}|vac\rangle$ is defined in such a way that it is  compatible with the relations (\ref{eq:anticom2}). 
    When $k=2m+1$, the element $|vac\rangle$ of $S_{+}$ satisfies $b|vac\rangle=|vac\rangle$. 
    The operators $c_1,\cdots, c_k$ acting on  $\widetilde{M}_k$ are called   Majorana operators. 

   We consider the diamond crystal $\Delta_d$. 
    To each vertex $v\in V(\Delta_d)$ we associate $\widetilde{M}_v$ which is isomorphic to $\widetilde{M}_{d+2}$. 
    Then, we define the space as
    \begin{align*}
        &\bigotimes_{v\in V(\Delta_d)}\widetilde{M}_v. 
    \end{align*}

There is an action of $\Gamma_{A_d}$ on $\Delta_d$ and the quotient space $\Delta_d/\Gamma_{A_d}$ is considered as a graph. 
We call this graph the base graph of $\Delta_d$ and denote it by $X_0$. 
We have a maximal abelian covering 
\begin{align*}
    &\pi:\Delta_d\rightarrow X_0
\end{align*}
(see T.~Sunada \cite{Sunada} 8.3 example (ii)). 
The graph $X_0$ is shown in Figure \ref{base4}. 
We set the vectors 
\begin{align}\label{eq:beta}
\beta_0=-p,\quad \beta_i=\alpha_i-p \mbox{ for } 1\leq i\leq d
\end{align}
with the $\mathbb{Z}$-basis $\{\alpha_i\}_{1\leq i\leq d}$ of $A_d$,  and $p=\frac{1}{d+1}\sum_{i=1}^{d}\alpha_i$. 
We denote by $E(X_0)$  the set of the edges of $X_0$. 
We denote by $e_{i}$ the edges of $X_0$ for $0\leq i\leq d$. 
For $a \in A_d+p$ the set $\pi^{-1}(e_i)$ consists of the edges connecting $a$ and $a+\beta_i\in A_d$ for $0\leq i\leq d$. 
The covering transformation group of $\pi:\Delta_d\rightarrow X_0$ is the lattice group $\Gamma_{A_d}$, which is in one-to-one correspondence with $H_1(X_0,\mathbb{Z})$. 

\begin{figure}[htbp]
    \begin{center}
    \includegraphics[width=80mm]{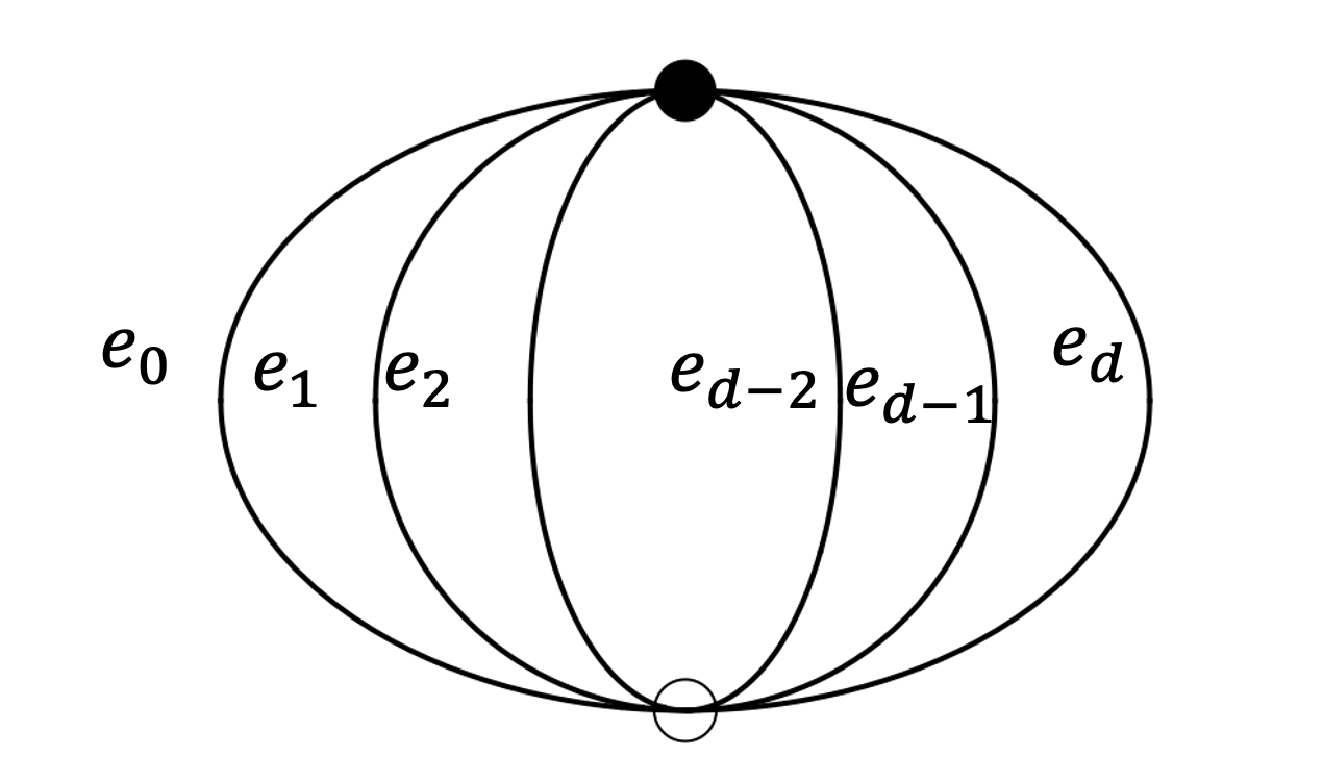}
    \caption{The base graph of $d$-dimensional diamond crystal}
    \label{base4}
    \end{center}
    \end{figure}
 
For each $i$, $0\leq i \leq d$, we choose a fundamental domain $D_{\beta_i}$ of $\Gamma_{A_d}$  as 
$$
    D_{\beta_i}=\{-\sum_{j\neq i}^{d}t_j (\beta_i-\beta_j) \mid\ 0 \leq t_j\leq 1\}. 
$$
We set $\gamma_i=\frac{d}{2}\beta_i$. 
We denote by $D'_{\beta_i}$ the shifted fundamental domain $D_{\beta_i}-\gamma_i$  as shown in Figure \ref{fundamental}. 
We set $P_1=-\frac{d}{2}\beta_i$, $P_2=-\frac{d}{2}\beta_i-\beta_i$. 

\begin{figure}[htbp]
    \begin{center}
    \includegraphics[width=100mm]{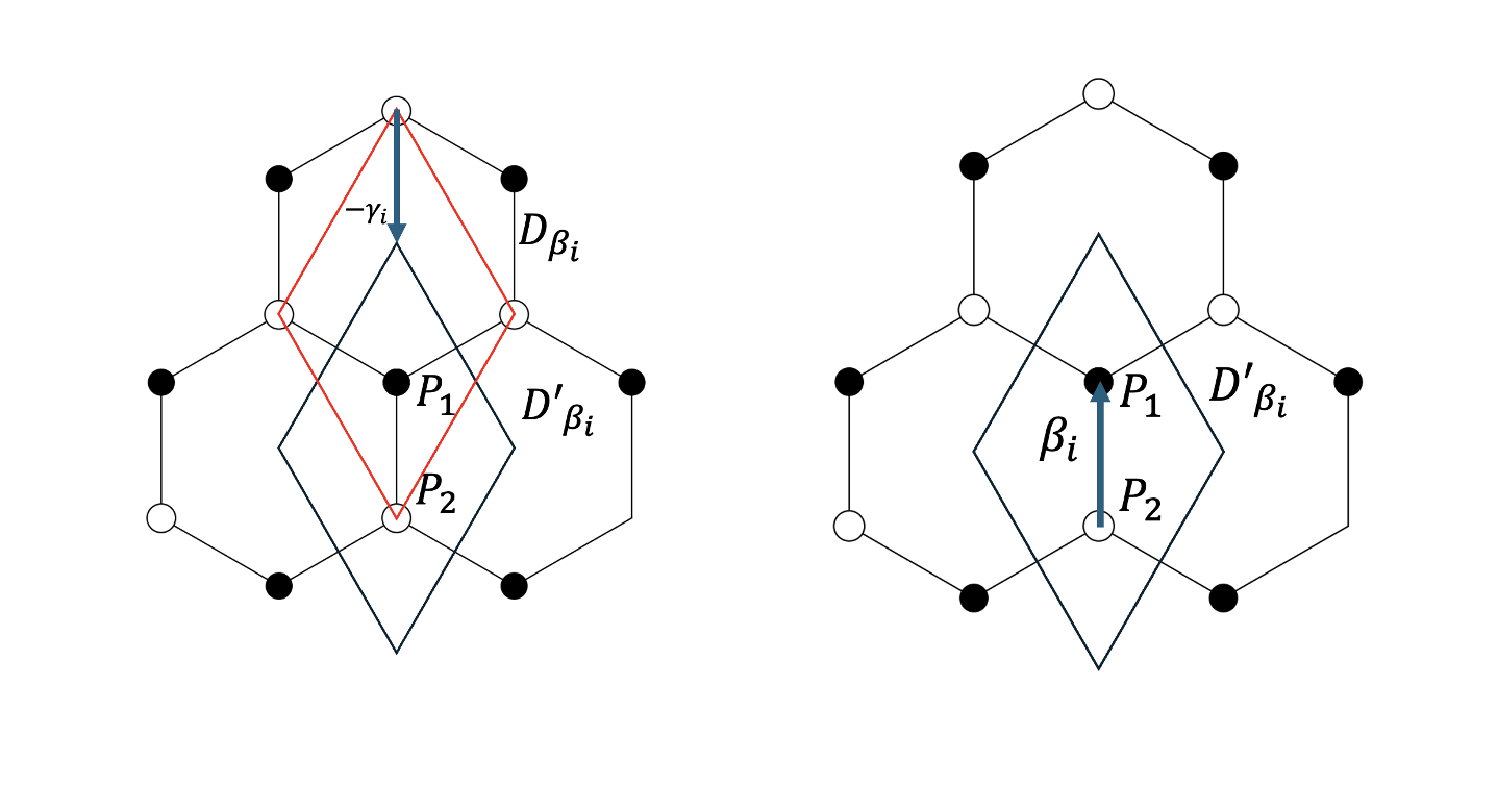}
    \caption{The shifted fundamental domain $D_{\beta_i}'$ of the honeycomb lattice}
    \label{fundamental}
    \end{center}
    \end{figure}

\begin{lem}\label{Pj}
The points $P_1$ and $P_2$ belong to the interior of $D'_{\beta_i}$ and there are no other vertices of $V(\Delta_d)$ belonging to  $D'_{\beta_i}$. 
\end{lem}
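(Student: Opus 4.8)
The plan is to reduce the statement to a count of coset representatives of $A_d$ inside a fundamental parallelepiped, and then to settle a genericity (boundary-avoidance) condition produced by the shift $\gamma_i$.

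First I would verify that $D'_{\beta_i}$ is genuinely a fundamental domain for the translation action of $\Gamma_{A_d}$ on $W$. Writing $\alpha_0:=0$, the $d$ edge vectors $\beta_i-\beta_j=\alpha_i-\alpha_j$ (with $j\neq i$, $0\le j\le d$) that span $D_{\beta_i}$ form a $\mathbb{Z}$-basis of $A_d$: they are $d$ linearly independent vectors of the rank-$d$ lattice $A_d$, and from $\alpha_i$ together with the differences $\alpha_i-\alpha_j$ one recovers every $\alpha_j$, so they generate $A_d$. Hence $D_{\beta_i}$ is a fundamental parallelepiped, and so is its translate $D'_{\beta_i}=D_{\beta_i}-\gamma_i$; its $\Gamma_{A_d}$-translates tile $W$.

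Next I would exploit the covering/bipartite structure to bound the number of vertices. Since $V(\Delta_d)=A_d\sqcup(A_d+p)$ consists of exactly two cosets of $A_d$, and since $\pi\colon\Delta_d\to X_0$ is the maximal abelian covering with deck transformation group $\Gamma_{A_d}$ while $X_0$ has exactly two vertices (the images of these two cosets), a fundamental domain for $\Gamma_{A_d}$ must meet $V(\Delta_d)$ in exactly one lift of each vertex of $X_0$. Concretely, the closed parallelepiped $D'_{\beta_i}$ contains at most one point of each coset in its interior, hence at most two vertices in total. I would then identify those interior vertices with $P_1$ and $P_2$: computing the residues of $P_1$ and $P_2$ modulo $A_d$ shows that one lies in $A_d$ and the other in $A_d+p$, so both are genuine vertices, while $P_2-P_1$ is a multiple of $\beta_i$, so they are the endpoints of a single edge in the class $e_i$. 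Solving $P_1,P_2=-\sum_{j\neq i}t_j(\beta_i-\beta_j)-\gamma_i$ for the parameters $t_j$ and checking $0<t_j<1$ in each case places both strictly inside; the shift $\gamma_i=\frac{d}{2}\beta_i$ is precisely the translation that carries the edge midpoint to the center of the parallelepiped, as recorded in the preparatory identity $\gamma_i=\tfrac12\sum_{j\neq i}(\beta_i-\beta_j)-\tfrac12\beta_i$.

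The main obstacle is the boundary analysis. To upgrade ``at most two vertices, both interior'' to ``exactly $P_1,P_2$ and no others in $D'_{\beta_i}$,'' I must show that $\partial D'_{\beta_i}$ contains no vertex of $\Delta_d$. This amounts to checking that none of the $2d$ bounding facets --- the loci where some $t_j$ equals $0$ or $1$ --- passes through a point of $A_d$ or of $A_d+p$. The role of the half-integer factor in $\gamma_i=\frac{d}{2}\beta_i$ is exactly to make each facet hyperplane miss both cosets, and carrying out this verification facet-by-facet, reducing modulo $A_d$, is the delicate step: without it, a vertex lying on a wall would be shared with a neighbouring translate and would corrupt the count. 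Once the boundary is shown to be vertex-free, ``belonging to $D'_{\beta_i}$'' coincides with ``belonging to its interior,'' and the two facts established above combine to give exactly $P_1$ and $P_2$.
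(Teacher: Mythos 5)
Your proposal is correct and is essentially the paper's argument: interiority of $P_1,P_2$ via the parameter values $\tfrac{d}{2(d+1)}$ and $\tfrac{d+2}{2(d+1)}$ lying in $(0,1)$, the orbit decomposition $V(\Delta_d)=(\Gamma_{A_d}\cdot P_1)\sqcup(\Gamma_{A_d}\cdot P_2)$, and the fundamental-domain property to rule out all other vertices. Two remarks on where you deviate. First, your verification that the spanning vectors $\beta_i-\beta_j$ ($j\neq i$) form a $\mathbb{Z}$-basis of $A_d$ fills a step the paper leaves implicit, and is a worthwhile addition. Second, what you single out as the main obstacle --- a facet-by-facet check that no vertex of $\Delta_d$ lies on $\partial D'_{\beta_i}$ --- is not actually needed, and this is exactly what the paper's closing step buys: every vertex other than $P_1,P_2$ has the form $g\cdot P_k$ with $g\neq e$, hence is an \emph{interior} point of the translate $g\cdot D'_{\beta_i}$; since $D'_{\beta_i}$ is a closed convex parallelepiped (so it equals the closure of its interior) and distinct translates have disjoint interiors, an interior point of $g\cdot D'_{\beta_i}$ cannot meet $D'_{\beta_i}$ at all, boundary included. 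Thus interiority of $P_1,P_2$ together with the tiling already keeps all other vertices off the facets, and no separate genericity role for the half-integer shift $\gamma_i$ has to be established.
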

\begin{proof}
The point $P_1$ is expressed by $-\sum_{j\neq i}^{d}t (\beta_i-\beta_j)$ with $t=\frac{d}{2(d+1)}\in (0,1)$. 
The point $P_2$ is expressed by $-\sum_{j\neq i}^{d}t (\beta_i-\beta_j)$ with $t=\frac{d+2}{2(d+1)}\in (0,1)$. 
Thus, the points $P_1$ and $P_2$ belong to the interior of $D'_{\beta_i}$.
With respect to the action of $\Gamma_{A_d}$, the set of vertices $V(\Delta_d)$ is expressed as the disjoint union of two orbits
$$
V(\Delta_d)=(\Gamma_{A_d}\cdot P_1)\sqcup  (\Gamma_{A_d}\cdot P_2). 
$$
For $\Gamma_{A_d}\ni g\neq e$ we have $g\cdot P_1\notin D'_{\beta_i}$, $g\cdot P_2 \notin D'_{\beta_i}$. 
Since $D_{\beta_i}'$ is a fundamental domain both $P_1$ and $P_2$ belong to the interior of $D'_{\beta_i}$. 
Thus the other vertices of $V(\Delta_d)$ do not belong to $D'_{\beta_i}$. 
\end{proof}

We define the labeling the edges of $X_0$ as $\ell: E(X_0) \rightarrow \mathbb{Z}$ where $\ell(e_{i})=i+1, 0\leq i\leq d$. 
When $(v,v')\in E(\Delta_d)$, we call $\pi((v,v'))$ the spin direction of the edge $(v,v')$ and $\ell(\pi((v,v')))$ the labeling of the edge $(v,v')$. 
We associate the Majorana operators to the vertices and the edges of the base graph as shown in Figure \ref{Majorana2}. 

\begin{figure}[htbp]
\begin{center}
\includegraphics[width=70mm]{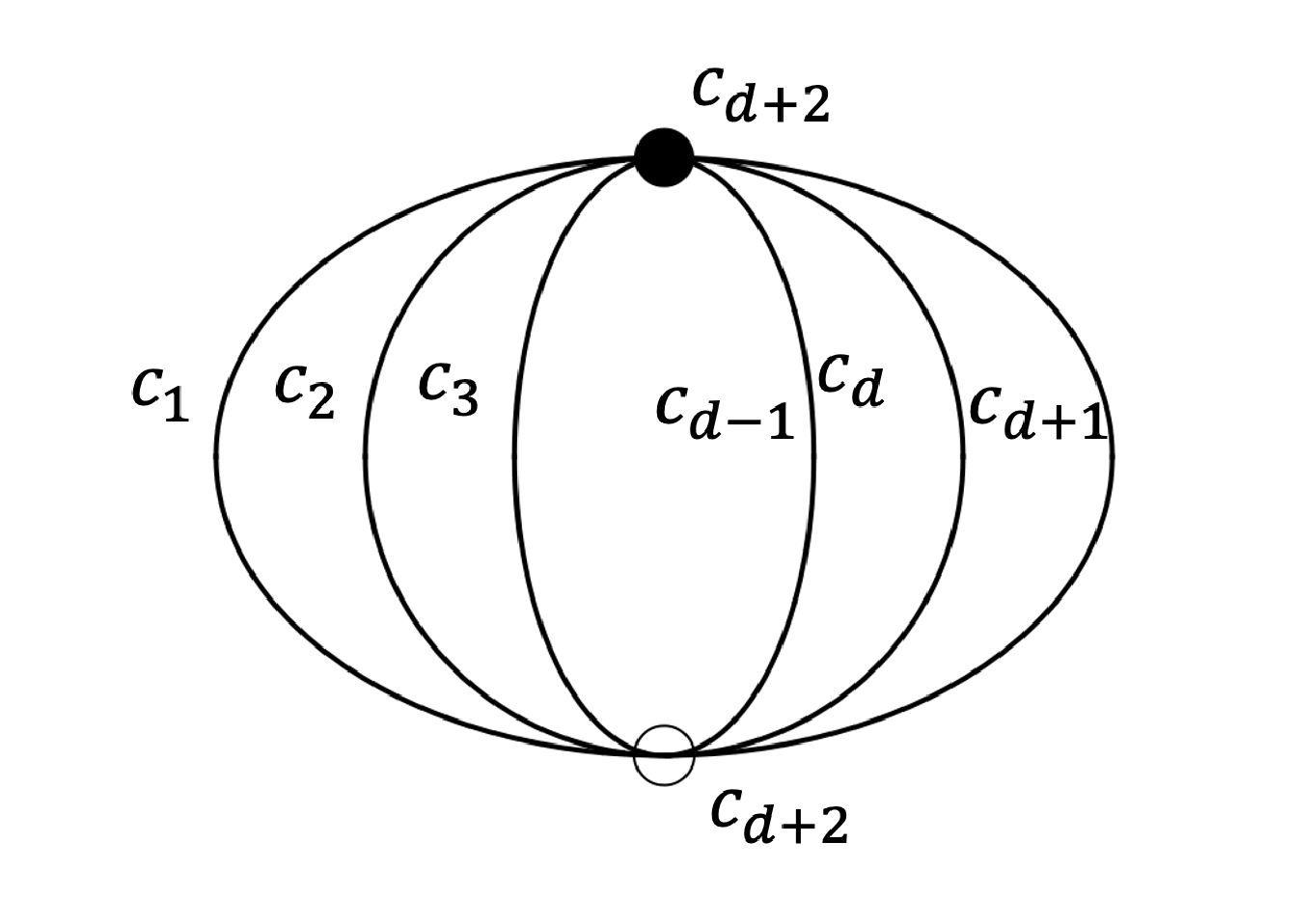}
\caption{The base graph of the $d$-dimensional diamond crystal and the Majorana operators}
\label{Majorana2}
\end{center}
\end{figure}

\section{The Hamiltonian of of the Kitaev model for $\Delta_d$}
In this section, we define the Hamiltonian of the Kitaev model for $\Delta_d$. 
We set $\sigma^{k}=\sqrt{-1}c_{k}c_{d+2}$. 
To each vertex $v$ we associate the operator $c_k^v$, which is the action of $c_k$ on $\widetilde{M}_v$ and  $Id$ on the other components of $\bigotimes_{v\in V(\Delta_d)}\widetilde{M}_v$. 
We set $\sigma_v^{k}=\sqrt{-1}c^v_{k}c^v_{d+2}$. 
\begin{dfn}
    We define the Hamiltonian as
    \begin{align}\label{eq:ham}
         H=-\sum_{e\in E(X_0)}\sum_{(v,v')\in \pi^{-1}(e)}J_{\ell(e)}\sigma_{v}^{\ell(e)}\sigma_{v'}^{\ell(e)}
    \end{align}
    where $J_{\ell(e)}\in \mathbb{R}$. 
\end{dfn}
We set 
$$\hat{u}_{v,v'}=\sqrt{-1}c_{\alpha_{v,v'}}^{v}c_{\alpha_{v,v'}}^{v'}$$
where $c_{\alpha_{v,v'}}^{v}$ is the Majorana operator, and $\alpha_{v,v'}\in \ell(E(X_0))$ is the the labeling of the edge $(v,v')$. 
The operator $H$ is also expressed as
$$H =\frac{\sqrt{-1}}{4}\sum_{v,v'\in V(\Delta_d)} \hat{A}_{v,v'}c_{v}c_{v'}$$
with
$$
\hat{A}_{v,v'}=
\begin{cases}
    2J_{\ell(e)}\hat{u}_{v,v'},&\quad \mbox{$(v,v') \in \pi^{-1}(e) $} \\
    0,&\quad \mbox{otherwise}
\end{cases}
$$
where $c_v$ and $c_{v'}$ are the Majorana operators $c_{d+2}^{v}$ and $c_{d+2}^{v'}$ for $v,v' \in V(\Delta_d)$. 
    We define the operator $D$ acting on the space $\widetilde{M_v}$ as
    $$D=(-1)^{\lfloor\frac{d+1}{2}\rfloor}(\frac{1}{\sqrt{-1}})^{\lfloor \frac{d}{2}\rfloor+1}\prod_{i-1}^{d+1}c_ic_{d+2}.$$ 
   We set 
    $$M_v=\{v\in \widetilde{M_v}\mid Dv=v\}.$$ 
The operator $D$ is also described as 
    $$D=(-1)^{\lfloor\frac{d}{2}\rfloor+1}\prod_{i=1}^{\lfloor\frac{d}{2}\rfloor+1}(1-2a_i^{\dagger}a_i).$$ 
The spectra of $D$ are $1$ and $-1$ since $D^2=Id$. 
Let $\nu(d)$ be the dimension of the space $M_v$. 
We have $\nu(d)=2^{\lfloor\frac{d}{2}\rfloor}$. 
We define the action of $\widetilde{D}$ on the space $\bigotimes_{v\in V(\Delta_d)}\widetilde{M_v}$ as 
$$\widetilde{D}(\bigotimes_{v\in V(\Delta_d)} u_v)=(\bigotimes_{v\in V(\Delta_d)} Du_v).$$ 
We set 
    $$
       M(\Delta_d)=\{u \in \bigotimes_{v\in V(\Delta_d)}\widetilde{M_v} \mid \widetilde{D}u=u\}. 
    $$
   We observe that the operators $H$ and $\widetilde{D}$ commute. 
    Thus, $M(\Delta_d)$ is invariant by the action of $H$.

\section{ Fourier transform}
In this section, we describe spectra of the Hamiltonian of the Kitaev model for $\Delta_d$ by the Fourier transform. 
\begin{lem}
    For adjacent vertices $v,v' \in \Delta_d$, the eigenvalues of the operator $\hat{u}_{v,v'}$ are $1$ and $-1$. 
    The space $M(\Delta_d)$ is decomposed into the eigenspaces of the eigenvalues $1$ and $-1$ as $M_{+}(\Delta_d)_{v,v'}\oplus M_{-}(\Delta_d)_{v,v'}$. 
\end{lem}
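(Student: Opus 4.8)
The plan is to establish the two assertions in turn: that $\hat{u}_{v,v'}$ is an involution with real spectrum, so that its eigenvalues are confined to $\{+1,-1\}$, and that it commutes with $\widetilde{D}$, so that it restricts to $M(\Delta_d)$ and splits it into the two eigenspaces. Throughout I write $\alpha=\alpha_{v,v'}$ and abbreviate $c^{v}=c^{v}_{\alpha}$, $c^{v'}=c^{v'}_{\alpha}$. The structural input I rely on is that single-site Majorana operators attached to distinct vertices anticommute, $\{c^{v},c^{v'}\}=0$, reflecting the graded (fermionic) nature of the tensor product $\bigotimes_{v}\widetilde{M}_v$; on a single factor the defining relation $\{c_i,c_j\}=2\delta_{ij}$ gives $(c^{v})^2=(c^{v'})^2=Id$.

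First I would compute the square. Using $(\sqrt{-1})^2=-1$ and moving $c^{v'}$ past $c^{v}$ with the cross-site sign, one gets $\hat{u}_{v,v'}^2=-c^{v}c^{v'}c^{v}c^{v'}=c^{v}c^{v}c^{v'}c^{v'}=(c^{v})^2(c^{v'})^2=Id$. Since each $c_\alpha$ is self-adjoint, the same cross-site relation yields $\hat{u}_{v,v'}^{\dagger}=-\sqrt{-1}\,c^{v'}c^{v}=\sqrt{-1}\,c^{v}c^{v'}=\hat{u}_{v,v'}$, so $\hat{u}_{v,v'}$ is Hermitian; together with $\hat{u}_{v,v'}^2=Id$ this confines the eigenvalues to $+1$ and $-1$. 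That both values actually occur follows from $\operatorname{tr}\hat{u}_{v,v'}=0$: each generator $c_\alpha$ is traceless in $\widetilde{M}_{d+2}$ (being conjugate to $-c_\alpha$ via any other generator), and the trace factorizes over the tensor components.

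Next I would check $[\hat{u}_{v,v'},\widetilde{D}]=0$. Since $D$ is, up to a scalar, the full product $c_1\cdots c_{d+1}c_{d+2}$ of the $d+2$ generators of $Cl_{d+2}$, a single generator satisfies $Dc_\alpha=(-1)^{d+1}c_\alpha D$ on each factor, because $c_\alpha$ anticommutes with the $d+1$ remaining generators and commutes with itself. As $\widetilde{D}$ acts by $D$ on every tensor factor while $\hat{u}_{v,v'}$ is bilinear, touching only the factors $v$ and $v'$, moving $\widetilde{D}$ across it produces the sign $(-1)^{d+1}$ twice, once from the factor $v$ and once from $v'$; these combine to $(-1)^{2(d+1)}=1$, and the residual graded signs from reordering operators across distinct factors cancel in the same manner, so $\widetilde{D}\hat{u}_{v,v'}=\hat{u}_{v,v'}\widetilde{D}$. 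Hence $\hat{u}_{v,v'}$ preserves $M(\Delta_d)=\{u\mid\widetilde{D}u=u\}$, and its restriction there is again an involution, yielding the spectral decomposition $M(\Delta_d)=M_{+}(\Delta_d)_{v,v'}\oplus M_{-}(\Delta_d)_{v,v'}$ into the $+1$ and $-1$ eigenspaces.

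The step I expect to be the main obstacle is the consistent bookkeeping of signs in the tensor product. One must fix the convention under which distinct-site Majorana operators anticommute --- a naive reading of ``$Id$ on the other components'' would instead make them commute and would wrongly give $\hat{u}_{v,v'}^2=-Id$ --- and then track the parity-dependent sign $(-1)^{d+1}$ through both tensor factors when commuting with $\widetilde{D}$. The feature that renders the statement uniform in $d$ is precisely that every such sign enters an even number of times and therefore cancels.
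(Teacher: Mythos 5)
Your proof is correct and its skeleton is the same as the paper's: show that $\hat{u}_{v,v'}$ is an involution (hence has eigenvalues $\pm 1$), show that it commutes with $\widetilde{D}$ (hence restricts to $M(\Delta_d)$), and conclude the eigenspace decomposition. The difference is one of completeness: the paper's proof consists of two assertions with essentially no computation --- commutation with $\widetilde{D}$ is stated without argument, and the eigenvalues $\pm1$ are deduced from $(c^{v}_{\alpha_{v,v'}})^2=1$ alone. You supply exactly what is missing: the computation $\hat{u}_{v,v'}^2=Id$, the Hermiticity, a tracelessness argument that both eigenvalues occur, and the sign bookkeeping $(-1)^{2(d+1)}=1$ behind $[\hat{u}_{v,v'},\widetilde{D}]=0$. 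Most valuably, you pinpoint the convention on which the lemma actually hinges: $(c^{v})^2=Id$ yields $\hat{u}_{v,v'}^2=Id$ only if Majorana operators at \emph{distinct} vertices anticommute. Under the paper's literal definition of $c^{v}_k$ as ``$c_k$ on $\widetilde{M}_v$ and $Id$ on the other components,'' distinct-site operators would commute, giving $\hat{u}_{v,v'}^2=-Id$, an anti-Hermitian $\hat{u}_{v,v'}$, and eigenvalues $\pm\sqrt{-1}$, contradicting the statement; the paper never addresses this, whereas your proof makes explicit that one must use Kitaev's convention in which all the $c^{v}_k$ generate a single Clifford algebra. Two minor points of looseness on your side: the phrase ``the trace factorizes over the tensor components'' sits awkwardly with the graded convention you yourself adopt (a cleaner route is the conjugation $c^{v}\hat{u}_{v,v'}(c^{v})^{-1}=-\hat{u}_{v,v'}$, which gives tracelessness directly); and tracelessness on $\bigotimes_{v}\widetilde{M}_v$ shows both eigenvalues occur on the full space, not yet that both eigenspaces meet $M(\Delta_d)$ nontrivially --- though the decomposition asserted in the lemma does not require that extra fact.
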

\begin{proof}
    The operator $\hat{u}_{v,v'}=\sqrt{-1}c_{\alpha_{v,v'}}^{v}c_{\alpha_{v,v'}}^{v'}$ acts on the space $\bigotimes_{v\in V(\Delta_d)}\widetilde{M}_v$ and $M(\Delta_d)$ since $\hat{u}_{v,v'}$ commutes with $\widetilde{D}$. 
     The operator $c_{\alpha_{v,v'}}^{v}$ satisfies $(c_{\alpha_{v,v'}}^{v})^2=1$. 
     Therefore the eigenvalues of $\hat{u}_{v,v'}$ are $\pm1$, and we have a direct sum decomposition  into eigenspaces $M_{+}(\Delta_d)_{v,v'}\oplus M_{-}(\Delta_d)_{v,v'}$. 
\end{proof}
     We set $$M_{+}(\Delta_d)=\bigcap_{v, v' \,\mbox{adjacent}}M_{+}(\Delta_d)_{v,v'}.$$ 

As in (2.1), $\{\alpha_i\}$ denotes the $\mathbb{Z}$-basis of $A_d$. 
  The Hamiltonian $H$ acting on $M_{+}(\Delta_d)$ is expressed as 
    $$
        H =\frac{\sqrt{-1}}{4}\sum_{v,v'\in V(\Delta_d)} A_{v,v'}c_{v}c_{v'}
    $$
    with
    $$
        A_{v,v'}=
        \begin{cases}
            2J_{\ell(e)}, &\quad \mbox{$(v,v') \in \pi^{-1}(e) $}\\
                        0,&\quad \mbox{otherwise}. 
        \end{cases}
    $$
We represent $v$ as $s\lambda$. 
Here $s=1$ if $v$ belongs to $\Gamma_{A_d}\cdot P_1$ and $s=0$ if $v$ belongs to $\Gamma_{A_d}\cdot P_2$. 
The symbol $\lambda\in \Gamma_{A_d}$ shows that $v\in \lambda\cdot D'_{\beta_i}$. 
In the case where $(v,v')\in \pi^{-1}(e)$, the vertices $v$ and $v'$ belong to interior of $D'_{\beta_0}, \cdots, D'_{\beta_d}$. 
We write the vertices of $V(X_0)$ as $0, 1$. 
Hence, the map $\pi$ can be written as $\pi(v)=s$. 
The fundamental domain $D'_{\beta_j}$ of the honeycomb lattice for $0\leq j\leq 2$ is shown in Figure \ref{fundamental2}. 
\begin{figure}[htbp]
    \begin{center}
    \includegraphics[width=60mm]{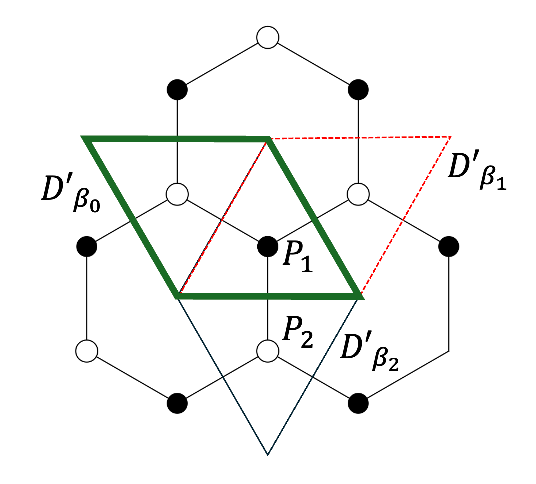}
    \caption{The fundamental domains of the honeycomb lattice $D'_{\beta_0},D'_{\beta_1}$, and $D'_{\beta_2}$}
    \label{fundamental2}
    \end{center}
    \end{figure}
Thus, the vertex $v$ is contained in $D'_{\beta_j}$ for $j \in\{0,\cdots,d\}$. 
Then we also describe $H$ as 
    $$
        H =\frac{\sqrt{-1}}{4}\sum_{(s\lambda),(t\mu) \in V(\Delta_d)} A_{s\lambda,t\mu}c_{s\lambda}c_{t\mu}
    $$
    with
    $$
        A_{s\lambda,t\mu}=
    \begin{cases}
        2J_{\ell(e)}, &\quad \mbox{$(s\lambda,t\mu) \in \pi^{-1}(e) $} \\
       0,&\quad \mbox{otherwise}. 
    \end{cases}
    $$
 
     
    In general, let $X$ be a topological crystal with a base graph $Y$ in the sense of Sunada \cite{Sunada}. 
    In this situation, Kato and Richard \cite{Kato}  establish a unitary transformation  $I\mathcal{F}\mathcal{U}_U:l^2(X,\mathbb{C})\to L^2(T^d,\mathbb{C}^{|Y|})$ where
     $$
     \mathcal{F}:l^2(\mathbb{Z}^d, l^2(Y))\to L^2(T^d,l^2(Y))
     $$
      is the Fourier transform and $|Y|$ is the number of vertices of $Y$. 
     We adapt this method to the diamond crystal $\Delta_d$. 
     We fix the fundamental domain $D'_{\beta_i}$. 
     As in Lemma \ref{Pj}, we take $P_1, P_2 \in D'_{\beta_i}$. 
     Let us recall that $V(\Delta_d)$ is decomposed into $2$ orbits as
     $$
     V(\Delta_d)=(\Gamma_{A_d}\cdot P_1)\sqcup  (\Gamma_{A_d}\cdot P_2). 
     $$
     There is a bijection $\psi: V(\Delta_d)\to A_d \times V(X_0)$ defined by 
     $$
     \psi (\mu\cdot P_1) =(\mu, 0)
    $$ 
    $$
    \psi (\mu\cdot P_2) =(\mu, 1). 
    $$
 The space $l^2(\Delta_d,\mathbb{C}^{\nu(d)})$ is defined as  
 $$
 l^2(\Delta_d,\mathbb{C}^{\nu(d)}):=\{f : V(\Delta_d) \to \mathbb{C}^{\nu(d)}\mid \sum_{v\in V(\Delta_d)}\sum_{i=1}^{\nu(d)}|f_i(v)|^2<\infty\}. 
 $$
  We fix an identification 
  $$
  l^2(X_0) \otimes \mathbb{C}^{\nu(d)} \cong \mathbb{C}^{\nu(d)}\oplus\mathbb{C}^{\nu(d)}
  $$
   so that the first component corresponds to $\Gamma_{A_d}\cdot P_1$, and the second component corresponds to $\Gamma_{A_d}\cdot P_1$. 

      The unitary transformation  
      $$
      U:l^2(\Delta_d, \mathbb{C}^{\nu(d)}) \to l^2(A_d,l^2(X_0)\otimes\mathbb{C}^{\nu(d)})
      $$  
      is defined by 
     \begin{equation}
     \label{U}
     [Uf](\mu)= (f(\mu\cdot P_1), f(\mu\cdot P_2)). 
     \end{equation}
    We write $f(\mu\cdot P_1)$ as $f_0(\mu)$, and $f(\mu\cdot P_2)$ as $f_1(\mu)$.


     
     As in (\ref{eq:root}), we consider $A_d$ as a lattice in $W$. 
     Let $W^*$ denote the dual space of $W$. 
     We set $\bm{x}=\sum_{i=1}^dx_i\alpha_i\in W$ and $\bm{q}=\sum_{i=1}^d\xi_ib_i\in W^*$ with $\langle b_j,\alpha_i\rangle=\delta_{ij}$. 
     Here $\langle \, , \rangle:W^*\times W\to \mathbb{R}$ is the canonical pairing. 
     We write $\langle\bm{q},\bm{x}\rangle$  as $\bm{q}\cdot \bm{x} = \sum^d_{j=1} \xi_j x_j$. 
      Let $\check{D}$ be a fundamental domain of the reciprocal lattice of $A_d$ in $W^*$. 
      We consider the Fourier transform 
          $$
          \mathcal{F}:l^2(A_d,l^2(X_0)\otimes\mathbb{C}^{\nu(d)}) \to L^2(\check{D},l^2(X_0)\otimes\mathbb{C}^{\nu(d)})
          $$
           defined by
           \begin{align*}
     [\mathcal{F}h](\bm{q})=\sum_{\mu\in A_d}e^{-\sqrt{-1}\bm{q}\cdot \mu }h(\mu). 
     \end{align*}
     for $h\in  l^2(A_d,l^2(X_0)\otimes\mathbb{C}^{\nu(d)})$ and $\xi \in\check{D}$. 
     By using a similar method as in \cite{Kato}, we obtain the inverse Fourier transform defined by
      \begin{align*}
      [\mathcal{F}^*u](\mu)=\int_{\check{D}}e^{2\pi\sqrt{-1}\xi\cdot \mu}u(\xi)d\xi
       \end{align*}
       for $u \in C(\check{D},l^2(X_0)\otimes\mathbb{C}^{\nu(d)})$ and $\mu \in A_d$ with $d\xi$  scaled so that $\int_{\check{D}}1d\xi=1$. 

      We define  the map 
      $$
      I:L^2(\check{D},l^2(X_0)\otimes\mathbb{C}^{\nu(d)})\to L^2(\check{D},\mathbb{C}^2\otimes\mathbb{C}^{\nu(d)})
      $$ 
      by using the identification (\ref{U}). 
     We write
     $$
     [Ig](\xi)=(g_0(\xi),g_1(\xi))
     $$
     for $g\in L^2(\check{D},l^2(X_0)\otimes\mathbb{C}^{\nu(d)})$ and $\xi \in \check{D}$. 
     
      The transformation $I\mathcal{F}U$  is unitary since $U$, $\mathcal{F}$, $I$ are  unitary. 
      It is expressed by the following diagram
$$
l^2(\Delta_d, \mathbb{C}^{\nu(d)})\rightarrow l^2(A_d,l^2(X_0)\otimes\mathbb{C}^{\nu(d)})\rightarrow L^2(\check{D},l^2(X_0)\otimes\mathbb{C}^{\nu(d)})\rightarrow L^2(\check{D},\mathbb{C}^2\otimes\mathbb{C}^{\nu(d)}). 
$$

 A minimum ground state energy function $\xi(\bm{q})$ is defined as follows. 
\begin{enumerate}
\item $\xi(\bm{q})$ is an eigenvalue of the operator $H$ acting on the space $M_{+}(\Delta_d)$. 
\item  An eigenfunction $f$ of $\xi(\bm{q})$ belongs to the space $l^2(\Delta_d,\mathbb{C}^{\nu(d)})$ and satisfies  the Bloch condition $f(t_{\alpha_i}\cdot v)=e^{\sqrt{-1}\bm{q}\cdot \alpha_i}f(v)$. 
\end{enumerate}
 This definition is motivated by a physical argument due to E.~H.~Lieb \cite{Lieb}. 


\begin{thm}
    \label{thm:spec}
    The minimum ground state energy functions of the Kitaev model of the $d$-dimensional diamond crystal  $\Delta_d$ are expressed as 
    $$
        \xi(\bm{q})=\pm2|J_1+\sum_{i=1}^{d}J_{i+1}e^{\sqrt{-1}\bm{q}\cdot \alpha_{i}}|. 
    $$ 
\end{thm}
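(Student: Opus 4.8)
The plan is to diagonalize the gauge-fixed quadratic Hamiltonian
$$H=\frac{\sqrt{-1}}{4}\sum_{v,v'\in V(\Delta_d)}A_{v,v'}c_vc_{v'}$$
on $M_{+}(\Delta_d)$ by transporting it through the unitary composition $I\mathcal{F}U$ displayed above, and then to read off its spectrum from the resulting $\bm{q}$-dependent finite matrix. On $M_{+}(\Delta_d)$ every link operator $\hat{u}_{v,v'}$ acts as $+1$, so the coefficients reduce to the real hopping matrix $A_{v,v'}=2J_{\ell(e)}$ for $(v,v')\in\pi^{-1}(e)$ and $0$ otherwise, and the eigenvalue problem for $H$ becomes that for the associated single-particle operator on $l^2(\Delta_d,\mathbb{C}^{\nu(d)})$, with $\nu(d)=2^{\lfloor d/2\rfloor}$.

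First I would push this operator through the chain of unitaries $l^2(\Delta_d,\mathbb{C}^{\nu(d)})\to l^2(A_d,l^2(X_0)\otimes\mathbb{C}^{\nu(d)})\to L^2(\check{D},l^2(X_0)\otimes\mathbb{C}^{\nu(d)})\to L^2(\check{D},\mathbb{C}^2\otimes\mathbb{C}^{\nu(d)})$. Since $\Delta_d$ is bipartite, the two sublattices being exactly the orbits $\Gamma_{A_d}\cdot P_1$ and $\Gamma_{A_d}\cdot P_2$ from Lemma \ref{Pj}, and since every edge joins the two sublattices, the transformed operator has vanishing diagonal blocks. Thus it becomes the direct integral over $\bm{q}\in\check{D}$ of operators $h(\bm{q})\otimes\mathrm{Id}_{\mathbb{C}^{\nu(d)}}$, where
$$h(\bm{q})=2\begin{pmatrix}0&f(\bm{q})\\ \overline{f(\bm{q})}&0\end{pmatrix},$$
and the prefactor $2$ records the coefficient $A_{v,v'}=2J_{\ell(e)}$ together with the normalization $\sqrt{-1}/4$ and the Majorana anticommutation relations.

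Next I would compute the off-diagonal entry $f(\bm{q})$ from the $d+1$ edges $e_0,\dots,e_d$ meeting at a vertex. By Definition \ref{dfn1} and (\ref{eq:beta}), a vertex $a\in A_d+p$ is joined to $a+\beta_0=a-p$ through $e_0$ and to $a+\beta_i=a+\alpha_i-p$ through $e_i$; under the bijection $\psi$ these connections correspond to the trivial translation and to the translation by $\alpha_i$, respectively. Hence the Fourier sum over $A_d$ collapses each edge $e_i$ to a single phase, and, using the labeling $\ell(e_i)=i+1$, one obtains
$$f(\bm{q})=J_1+\sum_{i=1}^{d}J_{i+1}e^{\sqrt{-1}\,\bm{q}\cdot\alpha_i},$$
with the edge $e_0$ contributing the coefficient $J_1$ and trivial phase. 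The eigenvalues of $h(\bm{q})$ are then $\pm2|f(\bm{q})|$, each with multiplicity $\nu(d)$ from the $\mathbb{C}^{\nu(d)}$ factor, which is the asserted spectrum.

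I expect the main obstacle to be the phase bookkeeping in the previous step: one must verify carefully that, once the shifted fundamental domains $D'_{\beta_i}$ and the representatives $P_1,P_2$ of Lemma \ref{Pj} are fixed, the lattice translation relating the two sublattice representatives across the edge $e_i$ is exactly $t_{\alpha_i}$, and is trivial across $e_0$. Only then do the exponents emerge as $\bm{q}\cdot\alpha_i$ with no spurious shift, so that the single-particle block is genuinely of the off-diagonal form above with the stated $f(\bm{q})$; everything else is the routine diagonalization of a $2\times2$ anti-diagonal matrix.
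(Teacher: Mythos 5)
Your proposal is correct and takes essentially the same approach as the paper: restrict to $M_{+}(\Delta_d)$ so that $A_{v,v'}=2J_{\ell(e)}$, transport $H$ through the unitary chain $I\mathcal{F}U$, use the bipartite decomposition $V(\Delta_d)=(\Gamma_{A_d}\cdot P_1)\sqcup(\Gamma_{A_d}\cdot P_2)$ to get a $2\times 2$ anti-diagonal block, and read off the eigenvalues $\pm2|f(\bm{q})|$. The only cosmetic difference is in the phase bookkeeping you flag as the main obstacle: you attach Bloch phases to the lattice translations directly, so $f(\bm{q})=J_1+\sum_{i=1}^{d}J_{i+1}e^{\sqrt{-1}\bm{q}\cdot\alpha_i}$ appears at once, whereas the paper writes the phases with the displacement vectors $\beta_i$, obtaining $f(\bm{q})=2\sum_{i=0}^{d}J_{i+1}e^{\sqrt{-1}\bm{q}\cdot\beta_i}$, and then factors out the unimodular factor $e^{\sqrt{-1}\bm{q}\cdot\beta_0}$ (using $\beta_i-\beta_0=\alpha_i$) to reach the same final expression; the two conventions differ by a $\bm{q}$-dependent diagonal gauge transformation and yield identical spectra.
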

\begin{proof}
       Applying the unitary transformation $I\mathcal{F}U:l^2(\Delta_d,\mathbb{C}^{\nu(d)})\to L^2(\check{D},\mathbb{C}^2\otimes\mathbb{C}^{\nu(d)})$  to the eigenfunction $f$, we obtain
      \begin{align*}
      [I\mathcal{F}Uf](\bm{q})=\tilde{f}(\bm{q})=
     \begin{pmatrix}
        \sum_{\lambda\in \Gamma_{A_d}}e^{-\sqrt{-1}\bm{q}\cdot \mu_{\lambda}}f(0\lambda )\\
        \sum_{\lambda\in \Gamma_{A_d}}e^{-\sqrt{1}\bm{q}\cdot  \mu_{\lambda}}f(1\lambda)
       \end{pmatrix}
        =
       \begin{pmatrix}
      \tilde{f}_0(\bm{q})\\
      \tilde{f}_1(\bm{q})
       \end{pmatrix}
      \end{align*}
      where $0\lambda, 1\lambda$ are  $s\lambda=v\in V(\Delta_d)$ and  $\mu_{\lambda}\in A_d$ corresponding to translation $\lambda\in \Gamma_{A_d}$. 
      
    We consider the Hamiltonian $(I\mathcal{F}U)H(I\mathcal{F}U)^{-1}$, which is regarded as a function of $\bm{q}$. 
    We write   $(I\mathcal{F}U)H(I\mathcal{F}U)^{-1}$ as $H(\bm{q})$. 
    The Hamiltonian $H(\bm{q})$ is also written as 
    $$
        H(\bm{q}) =\frac{1}{2}\sum_{\lambda,\mu} \tilde{A}_{\lambda,\mu}(\bm{q})a_{-\bm{q}\lambda}a_{\bm{q}\mu}
    $$
    with
    $$
        \tilde{A}_{\lambda,\mu}(\bm{q})=\sum_{s}A_{0\lambda,s\mu}e^{\sqrt{-1}\bm{q}\cdot \bm{r}_s},\ \
        a_{\bm{q}\lambda}=\sum_{s}c_{s\lambda}e^{-\sqrt{-1}\bm{q}\cdot \bm{r}_s},
     $$
    where $\bm{r}_s$ is the vector from $0\lambda$ to $s\lambda$ within the fundamental domain $\lambda\cdot D'_{\beta_j}$. 
    The operators $a_{\bm{q},\lambda}$ and $a^{\dagger}_{\bm{q},\mu}$ satisfy the relations
    \begin{align*}
        &a^{\dagger}_{\bm{q}\lambda}=a_{-\bm{q}\lambda}\\
        &\{a_{\bm{q}\lambda},a^{\dagger}_{\bm{q}\mu}\}=\delta_{\lambda\mu}. 
    \end{align*}
The Hamiltonian $H(\bm{q})$ is written as 
\begin{align*}
    H(\bm{q})&=\frac{\sqrt{-1}}{2}\sum_{\lambda}(\sum_{i=0}^{d}J_{i+1}e^{\sqrt{-1}\bm{q}\cdot \beta_i})a_{\bm{-q},\lambda}a_{\bm{q},\lambda}
\end{align*}
where $\sum_{i=0}^{d}\beta_i$=0. 
We set  $f(\bm{q})=2\sum_{i=0}^{d}J_ie^{\sqrt{-1}\bm{q}\cdot \beta_i}$. 
The action of $H(\bm{q})$ on $M_{+}(\Delta_d)$ is expressed as
\begin{align}
    \label{eq:hamiltonian1}
    H(\bm{q})=\frac{1}{4}\sum_{\lambda}(a_{\bm{-q},\lambda}a_{\bm{q},\lambda})
    \begin{pmatrix}
        O&\sqrt{-1}f(\bm{q})\\
        -\sqrt{-1}f(\bm{q})^*&O
    \end{pmatrix}
    \begin{pmatrix}
        a_{\bm{-q},\lambda}\\
        a_{\bm{q},\lambda}
    \end{pmatrix}. 
\end{align}
This expression represents a Hamiltonian $H(\bm{q})$ involving two Majorana operators. 
We set
\begin{align*}
    \sqrt{-1}\tilde{A}(\bm{q})=
    \begin{pmatrix}
        O&\sqrt{-1}f(\bm{q})\\
        -\sqrt{-1}f(\bm{q})^*&O
    \end{pmatrix}
\end{align*}
by (\ref {eq:hamiltonian1}). 
The eigenvalues of the matrix $\sqrt{-1}\tilde{A}(\bm{q})$  are $\pm |f(\bm{q})|$. 
We set  $\xi(\bm{q})=\pm|f(\bm{q})|$. 
Thus, we compute the eigenvalues as
\begin{align*}
    &\xi(\bm{q})=\pm2|\sum_{i=0}^{d}J_{i+1}e^{\sqrt{-1}\bm{q}\cdot \beta_i}|\\
    &=\pm2|e^{\sqrt{-1}\bm{q}\cdot \beta_0}||J_1+\sum_{i=1}^{d}J_{i+1}e^{\sqrt{-1}\bm{q}\cdot \alpha_i}|\\
    &=\pm2|J_1+\sum_{i=1}^{d}J_{i+1}e^{\sqrt{-1}\bm{q}\cdot \alpha_i}|. 
\end{align*}

This completes the proof. 
\end{proof}

\section{Zeros of the energy functions and  energy gaps.}
In this section, for the energy function $\xi(\bm{q})$, we describe  zeros  and energy gaps. 
The corresponding results in the case $d=2$ are due to  A.Kitaev \cite{Kitaev}. 

We consider $(J_0,\cdots,J_d)$ as the parameters in the equation (\ref{eq:ham}). 
\begin{thm}\label{lem:gap}
    For $J_i\in \mathbb{R}$, $0\leq i\leq d$, the inequalities  
    \begin{align}\label{eq:one}
        |J_i|\leq\sum_{0\leq j\leq d,i\neq j}|J_j|\, \mbox{ for all }i, 0\leq i\leq d
    \end{align}
are satisfied if and only if  there exists $\bm{q}\in\mathbb{R}^{d+1}$ such that $\xi(\bm{q})=0$. 
\end{thm}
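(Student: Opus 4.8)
The plan is to reduce the existence of a zero of $\xi(\bm{q})$ to a purely geometric closing condition for a polygon with prescribed side lengths, and then to establish that condition by an induction.

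First I would start from the formula of Theorem \ref{thm:spec}, in the equivalent form $\xi(\bm{q})=\pm 2\,|\sum_{i=0}^{d}J_{i+1}e^{\sqrt{-1}\bm{q}\cdot\beta_i}|$, so that $\xi(\bm{q})=0$ holds exactly when $\sum_{i=0}^{d}J_{i+1}e^{\sqrt{-1}\bm{q}\cdot\beta_i}=0$. The next point is to determine which phase vectors $(\bm{q}\cdot\beta_0,\dots,\bm{q}\cdot\beta_d)$ can occur. Since $\beta_i-\beta_0=\alpha_i$ for $1\le i\le d$ by (\ref{eq:beta}) and $\{\alpha_i\}$ is a basis of $W$, the linear map $\bm{q}\mapsto(\bm{q}\cdot\beta_i)_{i=0}^{d}$ is injective with image the hyperplane $\{\theta:\sum_i\theta_i=0\}$, using $\sum_{i=0}^{d}\beta_i=0$. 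Because $|\sum_i r_ie^{\sqrt{-1}\theta_i}|$ is unchanged under a global shift $\theta_i\mapsto\theta_i+c$, every phase vector in $\mathbb{R}^{d+1}$ is, up to such a shift, realized by some $\bm{q}$; moreover the sign of each $J_{i+1}$ may be absorbed into its phase via $-1=e^{\sqrt{-1}\pi}$. Hence $\xi$ has a zero if and only if there exist $\theta_0,\dots,\theta_d\in\mathbb{R}$ with $\sum_{i=0}^{d}r_i e^{\sqrt{-1}\theta_i}=0$, where $r_i=|J_{i+1}|$.

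The geometric core is then the assertion that $d+1$ planar vectors of lengths $r_0,\dots,r_d$ can be arranged head to tail into a closed polygon precisely when $r_i\le\sum_{j\ne i}r_j$ for every $i$. For necessity, if $\sum_i v_i=0$ with $|v_i|=r_i$ then $r_i=|v_i|=|\sum_{j\ne i}v_j|\le\sum_{j\ne i}r_j$ by the triangle inequality, which is exactly (\ref{eq:one}). For sufficiency I would prove the sharper auxiliary lemma that, for nonnegative reals $r_1,\dots,r_n$ (not all zero), the set of attainable values of $|\sum_i r_i e^{\sqrt{-1}\theta_i}|$ is exactly the interval $[\max(0,\,2\max_i r_i-\sum_i r_i),\ \sum_i r_i]$, together with the fact that when the magnitude is positive the sum may point in any direction. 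This I would prove by induction on $n$: the magnitude set is the continuous, hence connected, hence interval-valued image of a torus; the upper endpoint $\sum_i r_i$ arises by aligning all vectors, while the lower endpoint is obtained by adding a last free vector of length $r_n$ to the inductive annulus for $r_1,\dots,r_{n-1}$, i.e. by computing $\bigcup_{\rho\in[a,b]}[\,|\rho-r_n|,\ \rho+r_n\,]$, which equals $[\operatorname{dist}(r_n,[a,b]),\,b+r_n]$ and is checked by cases to match the claimed formula.

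Granting the auxiliary lemma, the value $0$ is attainable iff it lies in that interval, i.e. iff $2\max_i r_i\le\sum_i r_i$, which is the single inequality (\ref{eq:one}) at the maximal index; and this one inequality forces all the others, since for $i$ not maximal one has $r_i\le\max_j r_j\le\sum_{j\ne i}r_j$ (the maximal index being among the $j\ne i$). Translating back through $r_i=|J_{i+1}|$ yields (\ref{eq:one}) for all $i$, completing both directions. I expect the main obstacle to be the inductive computation of the lower endpoint of the magnitude interval, the heart of the sufficiency direction; by contrast, the reduction from $\bm{q}$ to free phases and the necessity direction are routine once the image of the phase map has been identified.
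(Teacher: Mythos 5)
Your proposal is correct, but its core lemma is genuinely different from the paper's. Both arguments share the outer reduction --- $\xi(\bm{q})=0$ iff some phases $\theta_i$ make $\sum_i|J_i|e^{\sqrt{-1}\theta_i}$ vanish, with necessity coming from the triangle inequality --- but the paper establishes sufficiency through Lemma \ref{lem:polygon}, a synthetic polygon-existence statement (stated for strictly positive lengths and strict inequality) proved by induction that attaches a triangle with sides $e=a_d-a_0+\epsilon$, $a_0$, $a_d$ to a $d$-gon along the side $e$, and then patches the equality case $|J_i|=\sum_{j\neq i}|J_j|$ by a separate degenerate configuration inside the theorem's proof. You instead prove the sharper analytic lemma that the attainable set of $|\sum_i r_ie^{\sqrt{-1}\theta_i}|$ is exactly the interval $[\max(0,2\max_ir_i-\sum_ir_i),\ \sum_ir_i]$, via connectedness of the continuous image of the torus and the union computation $\bigcup_{\rho\in[a,b]}[\,|\rho-r_n|,\rho+r_n\,]=[\operatorname{dist}(r_n,[a,b]),\,b+r_n]$; I checked the case analysis matching this union to the claimed endpoints and it is correct. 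Your route buys three things: the equality case and vanishing couplings $r_i=0$ are handled uniformly rather than by separate arguments (the paper's lemma assumes $0<a_0$ and strict inequality); the lower endpoint of your interval is quantitative information about the size of the gap, which would also streamline Theorem \ref{thm:gap2}; and, most importantly, your passage from free phases back to $\bm{q}$ via invariance of the modulus under the global shift $\theta_i\mapsto\theta_i+c$ repairs an actual slip in the paper: since $\sum_{i=0}^{d}\beta_i=0$, the vectors $\beta_0,\dots,\beta_d$ are \emph{not} linearly independent as asserted in the paper's treatment of the system (\ref{eq:linear}), and that system is solvable only when $\sum_i\theta_i=0$; your normalization by a global phase supplies exactly the missing step. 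Two cosmetic points that do not affect your argument: the map $\bm{q}\mapsto(\bm{q}\cdot\beta_i)_{i}$ on $\mathbb{R}^{d+1}$ is surjective onto the hyperplane $\{\theta:\sum_i\theta_i=0\}$ but not injective (its kernel is the line orthogonal to $W$), and your indexing $r_i=|J_{i+1}|$ versus the statement's $J_0,\dots,J_d$ merely mirrors an index shift already present in the paper between Theorem \ref{thm:spec} and Theorem \ref{lem:gap}.
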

The following lemma might be a well-known fact, although we provide a proof since we could not find it in the literature. 
\begin{lem}\label{lem:polygon}
    We suppose $0< a_0\leq \cdots\leq a_d$. The inequality
    \begin{align}\label{eq:two}
        a_d<\sum_{j=0}^{d-1}a_j
    \end{align}
    is satisfied if and only if there exists a $(d+1)$-gon such that the lengths of the sides are $a_0, a_1,\cdots, a_d$. 
\end{lem}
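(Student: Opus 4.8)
The plan is to prove the two directions separately, treating this as a classical polygon-inequality statement. The quantities $a_0 \leq a_1 \leq \cdots \leq a_d$ are prescribed side lengths, and the claim characterizes realizability of a (possibly degenerate-free) closed polygon in terms of the single ``longest side'' inequality $a_d < \sum_{j=0}^{d-1} a_j$. The geometric content is that a closed polygon with given side lengths exists precisely when no single side is too long to be ``reached around'' by the others.

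First I would prove the easy direction ($\Rightarrow$ fails, i.e. the contrapositive for necessity): suppose a $(d+1)$-gon with sides $a_0,\ldots,a_d$ exists. Place its edges as vectors $\bm{v}_0,\ldots,\bm{v}_d$ with $|\bm{v}_j| = a_j$ and $\sum_{j=0}^{d} \bm{v}_j = \bm{0}$, which is the closure condition. Then $\bm{v}_d = -\sum_{j=0}^{d-1}\bm{v}_j$, so by the triangle inequality $a_d = |\bm{v}_d| \leq \sum_{j=0}^{d-1}|\bm{v}_j| = \sum_{j=0}^{d-1} a_j$; and because the polygon is nondegenerate the inequality is strict, giving \eqref{eq:two}. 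For the converse (sufficiency), I would argue by an explicit construction: assuming $a_d < \sum_{j=0}^{d-1} a_j$, I build a closed polygon by choosing the directions of the edge vectors appropriately. The cleanest route is induction on $d$: the base case $d=2$ is the ordinary triangle inequality (three segments $a_0,a_1,a_2$ with $a_2 < a_0+a_1$ form a genuine triangle), and for the inductive step I would ``merge'' the two shortest sides $a_0, a_1$ into a single effective side of length $\ell$ realizing the planar triangle on $a_0,a_1,\ell$, reducing to a $d$-gon on $\ell, a_2,\ldots,a_d$; the point is to show one can pick $\ell$ with $a_d - \sum_{j=2}^{d-1}a_j < \ell < a_d + \sum \cdots$ so that the longest-side inequality is preserved at the lower level, after which the inductive hypothesis applies.

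The main obstacle will be the sufficiency direction, specifically verifying that the reduced system $\ell, a_2, \ldots, a_d$ still satisfies the hypothesis of the lemma (the longest-side inequality) for some admissible choice of $\ell$, and handling the reordering since $\ell$ need not be the smallest among $\ell, a_2,\ldots,a_d$. I would isolate this as a short computation: the admissible range for $\ell$ coming from the triangle $a_0,a_1,\ell$ is $|a_1 - a_0| \le \ell \le a_0 + a_1$, and I must intersect it with the interval forcing the $d$-gon condition on the new tuple; the hypothesis $a_d < \sum_{j=0}^{d-1}a_j$ guarantees this intersection is nonempty. An alternative, avoiding induction entirely, is the continuity/intermediate-value argument: parametrize the first $d$ edges as a ``chain'' whose endpoints sweep out an annulus of inner radius $|a_{\max} - \sum \text{others}|$ and outer radius $\sum_{j=0}^{d-1} a_j$ as the joint angles vary, and observe that $a_d$ lies strictly inside this annulus exactly when \eqref{eq:two} holds, so the chain can be closed by the final edge. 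I expect the induction to give the tidier writeup, with the interval-nonemptiness check being the one place requiring genuine care.
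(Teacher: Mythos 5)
Your proposal is correct and takes essentially the same approach as the paper: necessity from the triangle inequality applied to the closing side, and sufficiency by induction on $d$, splitting off a triangle along an auxiliary side and gluing it to a polygon with one fewer side. The only difference is a detail of the reduction---you fuse the two shortest sides $a_0,a_1$ into a side $\ell$ chosen from the open interval $\bigl(\max(a_1-a_0,\,a_d-S),\,\min(a_0+a_1,\,a_d+S)\bigr)$ with $S=\sum_{j=2}^{d-1}a_j$, whose nonemptiness is exactly the hypothesis \eqref{eq:two}, whereas the paper fuses the shortest and longest sides $a_0,a_d$ into the explicit side $e=a_d-a_0+\epsilon$ and then verifies the reduced tuple in two cases; both verifications go through, and your interval-intersection choice even avoids that case split.
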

\begin{proof}
   We suppose that there exists a $(d+1)$-gon such that the lengths of the sides are $a_0, a_1,\cdots, a_d$. 
    Since a side is the shortest length connecting  two endpoints of a edge of a polygon, the inequalities (\ref{eq:two}) hold. 

    Conversely, we suppose the inequalities  (\ref{eq:two}). 
    We prove the statement by induction on $d$. 

    First, we consider the case $d=2$.
    Then the statement holds because of the triangle inequality. 

    Next, we assume that the statement holds in the case  $d-1$. 
    We choose $\epsilon>0$  such that $\epsilon<a_0$ and $\epsilon<\sum_{j=0}^{d-1}a_j-a_d$. 
We set $e=a_d-a_0+\epsilon$. 
Since the inequalities 
 $$a_0<e+a_d,\quad a_d<e+a_0=a_d+\epsilon,\quad e=a_d-(a_0-\epsilon)<a_0+a_d$$
 hold, there exists a triangle such that the lengths of the sides are $e$, $a_0$, $a_d$.

We consider the following cases (1) and (2). 

\noindent
    (1) In the case $e>a_{d-1}$, the inequality 
    $$
    e<\sum_{i=1}^{d-1}a_i
    $$
    holds. 

\noindent
    (2) In the case $e\leq a_{d-1}$, the inequality 
    $$
    a_{d-1}<e+\sum_{i=1}^{d-2}a_i=a_d+\epsilon+a_1-a_0+\sum_{i=2}^{d-2}a_i
    $$
    holds.

In both cases, by hypothesis of induction there exists a $d$-gon such that the lengths of the sides are $e, a_1,\cdots,a_{d-1}$. 
We attach the $d$-gon and the triangle by identifying them along the side of  the length  $e$. 
This construction yields a $(d+1)$-gon. 
By choosing $\epsilon$ sufficiently small, the two polygons sharing the side of length  $e$ can be arranged so that they do not overlap. 
Therefore there exists a $(d+1)$-gon such that the lengths of the sides are $a_0, a_1,\cdots, a_d$. 
\end{proof}

We prove Theorem \ref{lem:gap}. 
\begin{proof}
    We suppose that the inequalities (\ref{eq:one}) are satisfied. 
    If the inequality $|J_i|<\sum_{i\neq j,0\leq j\leq d}|J_j|$ holds for any $i, 0\leq i\leq d$, then by Lemma \ref{lem:polygon} there exists a $(d+1)$-gon such that the lengths of the sides are $|J_0|, |J_1|,\cdots, |J_d|$. 
    Thus, there exist $\theta_0,\cdots, \theta_d$ such that
    \begin{align}\label{eq:angle}
        \sum_{i=0}^{d}J_ie^{\sqrt{-1}\theta_i}=0. 
    \end{align} 
    If there exists $i$, $0\leq i\leq d$ such that $|J_i|=\sum_{j=1,i\neq j}^{d}|J_j|$, then we have $\theta_0,\cdots, \theta_d$ such that the equation (\ref{eq:angle}) holds since
    $$\sum_{i\neq j,0\leq j\leq d}|J_j|-|J_i|=0.$$ 

    For $\beta_0,\cdots,\beta_d$ in the equation (\ref{eq:beta}), we consider a system of linear equations 
    \begin{align}\label{eq:linear}
    \bm{q}\cdot\beta_i=\theta_i \mbox{ for } i, 0\leq i\leq d
     \end{align} for $\bm{q}\in\mathbb{R}^{d+1}$. 
     Since the vectors $\beta_0,\cdots,\beta_d$ are linearly independent, the system of equations (\ref{eq:linear}) has a unique solution. 
     For such $\bm{q}$, we have $\xi(\bm{q})=0$. 
      Therefore there exists $\bm{q}\in\mathbb{R}^{d+1}$ such that $\xi(\bm{q})=0$. 
    
    Conversely, we suppose that there exists $\bm{q}\in\mathbb{R}^{d+1}$ such that $\xi(\bm{q})=0$. 
    By Lemma \ref{lem:polygon}, if there exists $i$, $0\leq i\leq d$ such that $|J_i|>\sum_{0\leq j\leq d,j\neq i}|J_j|$, then
    $$\sum_{i=0}^dJ_ie^{\sqrt{-1}\theta_i}\neq0$$
    for all $\theta_i\in\mathbb{R}$. 
        Therefore the inequalities (\ref{eq:one}) are satisfied. 
    This completes the proof of  Theorem \ref{lem:gap}. 
\end{proof}

We define the simplex $\Phi_d$ as
$$\Phi_{d}=\{(x_0,\cdots,x_d)\in \mathbb{R}^{d+1}\mid\sum_{i=0}^dx_i=1\mbox{ and }x_i\geq0\}.$$ 
We define the region $\Omega_d$ as
$$\Omega_{d}=\{(x_0,\cdots,x_d)\in \Phi_d\mid x_i > \frac{1}{2}\mbox{ for some }i, 0\leq i\leq d \}.$$ 
We show that energy gaps appear for $(|J_0|,\cdots,|J_d|)\in \Omega_d$. 
\begin{thm}\label{thm:gap2}
    We suppose that $(|J_0|,\cdots |J_d|)\in \Phi_d$. Then for any $\bm{q}\in\mathbb{R}^{d+1}$, we have $\xi(\bm{q})\neq0$  if and only if the condition $(|J_0|, \cdots, |J_d| )\in \Omega_d$ holds. 
\end{thm}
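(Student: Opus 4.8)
The plan is to reduce Theorem~\ref{thm:gap2} to Theorem~\ref{lem:gap} by contraposition and then translate the polygon-nonexistence condition into the explicit region $\Omega_d$. First I would observe that, under the standing assumption $(|J_0|,\dots,|J_d|)\in\Phi_d$, we have $\sum_{i=0}^d|J_i|=1$. By Theorem~\ref{lem:gap}, the existence of $\bm{q}$ with $\xi(\bm{q})=0$ is equivalent to the family of inequalities $|J_i|\le\sum_{j\neq i}|J_j|$ holding for every $i$. Hence $\xi(\bm{q})\neq0$ for all $\bm{q}$ is equivalent to the negation: there exists some index $i$ with
\[
    |J_i| > \sum_{0\le j\le d,\ j\neq i}|J_j|.
\]

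The key step is to rewrite this negated inequality using the normalization $\sum_j|J_j|=1$. For the offending index $i$ we have $\sum_{j\neq i}|J_j| = 1 - |J_i|$, so the inequality $|J_i| > \sum_{j\neq i}|J_j|$ becomes $|J_i| > 1 - |J_i|$, i.e. $|J_i| > \tfrac12$. Thus the condition ``there exists $i$ with $|J_i|>\sum_{j\neq i}|J_j|$'' is exactly the condition ``there exists $i$ with $|J_i|>\tfrac12$,'' which is precisely the defining property of $\Omega_d$. Therefore $\xi(\bm{q})\neq0$ for all $\bm{q}\in\mathbb{R}^{d+1}$ if and only if $(|J_0|,\dots,|J_d|)\in\Omega_d$.

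I would present this as a short chain of equivalences, citing Theorem~\ref{lem:gap} for the first step and using only the linear bookkeeping $\sum_j|J_j|=1$ for the second. I do not expect a genuine obstacle here, since all the analytic content (the existence of a $(d+1)$-gon, the solvability of the linear system $\bm{q}\cdot\beta_i=\theta_i$, and the resulting vanishing of $\xi$) has already been absorbed into Theorem~\ref{lem:gap}. The one point requiring a moment of care is the boundary case where equality $|J_i|=\tfrac12$ could occur for the maximal index; I would note that on $\Phi_d$ at most one coordinate can strictly exceed $\tfrac12$, and that the strict inequalities in the definition of $\Omega_d$ match the strict inequalities coming from the negation of Theorem~\ref{lem:gap}, so the equivalence is clean and no separate boundary argument is needed.
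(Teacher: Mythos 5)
Your proposal is correct and follows essentially the same route as the paper: both reduce the statement to Theorem~\ref{lem:gap} and use the normalization $\sum_{i=0}^d|J_i|=1$ to translate the violated inequality $|J_i|>\sum_{j\neq i}|J_j|$ into $|J_i|>\tfrac12$, i.e.\ membership in $\Omega_d$. The only difference is presentational: the paper argues the two implications separately (the converse by contradiction), while you phrase it as a single chain of equivalences, which is slightly cleaner but contains no new idea.
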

\begin{proof}
We suppose that the condition $(|J_0|, \cdots, |J_d| )\in \Omega_d$ holds. 
In the simplex $\Phi_d$, by Theorem \ref{lem:gap}, if there exists some $i$, $0\leq i\leq d$ such that $|J_i|>\sum_{0\leq j\leq d, j\neq i}|J_j|$, then $\xi(\bm{q})\neq 0$ for all $\bm{q}\in\mathbb{R}^{d+1}$. 
By $\sum_{i=0}^d|J_i|=1$, if we have $|J_i|>\frac{1}{2}$, then  $\sum_{0\leq j\leq d,i\neq j}|J_j|<\frac{1}{2}$. 
Thus, in the simplex $\Phi_d$ if there exists a $i$, $0\leq i\leq d$ such that $|J_i|>\frac{1}{2}$, then the inequalities (\ref{eq:one}) are not satisfied. 
Therefore for any $\bm{q}\in\mathbb{R}^{d+1}$, we have $\xi(\bm{q})\neq0$. 

Conversely, we suppose that for any $\bm{q}\in\mathbb{R}^{d+1}$ we have $\xi(\bm{q})\neq0$. 
We assume that the condition $(|J_0|, \cdots,|J_d|)\notin \Omega_d$ holds. 
Then we have $|J_i|\leq\frac{1}{2}$ for all $i$, $0\leq i\leq d$. 
Since the inequalities (\ref{eq:one}) holds when the parameters $(|J_0|,\cdots, |J_d|)$ satisfy $\sum_{i=0}^d|J_i|=1$, there exists $\bm{q}\in \mathbb{R}^{d+1}$ such that $\xi(\bm{q})=0$. 
Therefore, the condition $(|J_0|, \cdots,|J_d| )\in \Omega_d$ holds. 
\end{proof}

Theorem \ref{thm:gap2}  shows that   energy gaps appear in the region $\Omega_d$. 

\section{Relation with the tight binding model}
In this section, we compare the minimum ground state energy functions of the Kitaev model with the energy functions of the tight binding model. 
We treat the Schr\"{o}dinger equation with a periodic potential $V(x)$ expressed as
$$
    \hat{H}\psi=E\psi
$$
with
$$
    \hat{H}=-\frac{\hbar^2}{2m}\Delta+V(x). 
$$
Let $G$ be the crystallographic group for $\Delta_d$. 
We suppose that $V(x)$ is invariant by the action of $G$. 
This Schr\"{o}dinger operator $\hat{H}$ is a linear map $L^2(\mathbb{R}^d)\to L^2(\mathbb{R}^d)$. 
As in section 4 and 5, the domain $D'_{\beta_j}$ is a fundamental domain of $A_d$ in $W$ and the domain $\check{D}$ is a fundamental domain of the reciprocal lattice of $A_d$ in $W^*$. 
Then, the operator $\hat{H}$ on $L^2(\mathbb{R}^d)$ is unitarily equivalent to the direct-integral decompositions of the Hamiltonians $\hat{H}(\bm{q})$ acting on $L^2(D'_{\beta_j})$ with respect to $\check{D}$ (see, for example in P.~Kuchment \cite{Kuchment2016}, section 4). 

As an approximation model for the Schr\"{o}dinger equation, we define the tight binding model for $\Delta_d$ by using the base graph $X_0$. 
    The space $L^2_{loc}(\mathbb{R}^d)$ denotes 
 $$
\left\{
f : \mathbb{R}^d \to \mathbb{C}
\;\middle|\;
f \in L^2(V) \text{ for every compact } V \subset \mathbb{R}^d
\right\}. 
$$
    We consider a family of functions $\psi_v\in L^2_{loc}(\mathbb{R}^d)$, $v\in V(\Delta_d)$ satisfying the following conditions. 
    \begin{enumerate}
    \item For each fixed $\bm{q}\in \check{D}$, $\langle\psi_v|\psi_{v'}\rangle=\int_{D'_{\beta_j}}\psi_v^{*}\psi_{v'}d\bm{x}=\delta_{v,v'}$ for $v,v'\in V(\Delta_d)$. 
   \item $\psi_v(x)$ is written as $e^{\sqrt{-1}\bm{q}\cdot x}u_v(x)$, $v\in V(\Delta_d)$ where $u_v(x)$ is invariant by the action of $G$ and $\bm{q}$ belongs to $\check{D}$. 
    \end{enumerate}
    Let $\psi^0_{s}(x)$ be the function defined by
    $$
    \psi^0_{s}(x)=\sum_{v\in \pi^{-1}(s)}\psi_v(x),
    $$
     with $s\in V(X_0)$. 
     Then, $\psi_{0}^0$ and $\psi_{1}^0$ are the linearly independent. 
       The vertices $v$ and $v'$ are nearest neighbors if and only if $v$ and $v'$ are adjacent. 
    We denote by $[v,v']$ the oriented edge connecting $v$ and $v'$. 
    We suppose that
    $$
       h_{v,v'}=\int_{D'_{\beta_j}}{\psi_v}^*\hat{H}\psi_{v'}dx
    $$
        is given as
    $$
        h_{v,v'}=
        \begin{cases}
            0,&\quad v=v' \\
            \sum_{[v,v']\in \pi^{-1}(e')}t_{v,v'} e^{\sqrt{-1}\bm{q}\cdot r_{v,v'}}, &\quad v\mbox{ and}\,v' \mbox{are adjacent}\\
            0, &\quad \mbox{otherwise}
        \end{cases}
    $$
    with $t_{v,v'}=t_{v',v}^*$, $v,v'\in V(\Delta_d)$, $e'\in E(X_0)$ and where $r_{v,v'}$ is the vector representing  the oriented edge $[v, v']$. 
When the vertices $v,v'\in V(\Delta_d)$ are adjacent and  the vector of $[v,v']$ is $\beta_i$, we write $t_{v,v'}$ as $t_{i+1}$ for $0\leq i \leq d$. 
Then, $\hat{H}$ which acts on $(\psi_{0}^0, \psi_{1}^0)^T$ is written as
   \begin{align*}
   \hat{H}
    \begin{pmatrix}
    \psi_{0}^0\\
    \psi_{1}^0
   \end{pmatrix}=
   \begin{pmatrix}
    0&r(\bm{q})\\
    r(\bm{q})^*&0
   \end{pmatrix}
   \begin{pmatrix}
    \psi_{0}^0\\
    \psi_{1}^0
   \end{pmatrix}
\end{align*}
where $r(\bm{q})=\sum_{i=0}^{d}t_{i+1}e^{\sqrt{-1}\bm{q}\cdot \beta_i}$. 
We set
  \begin{align*}
 A=
\begin{pmatrix}
    0&r(\bm{q})\\
    r(\bm{q})^*&0
   \end{pmatrix}. 
   \end{align*}
The eigenvalues $E(\bm{q})$ of this matrix $A$ are energy functions of the tight binding model. 
Thus, $E(\bm{q})=\pm|r(\bm{q})|=\pm|t_1+\sum_{i=1}^{d}t_{i+1}e^{\sqrt{-1}\bm{q}\cdot \alpha_i}|$. 
This result shows that
the energy functions of the tight binding model of $\Delta_d$ are 
    \begin{align}
        \label{eq:energy}
        E(\bm{q})=\pm |t_1+\sum_{i=1}^{d}t_{i+1}e^{\sqrt{-1}\bm{q}\cdot \alpha_i}|. 
    \end{align}

In \cite{2}, M. Tsuchiizu applies a similar method using the base graph for the tight binding model of the \( K_4 \) lattice. 
From Theorem \ref{thm:spec} and  (\ref{eq:energy}), when we set $2J_i=t_i$  the minimum ground state energy functions of the Kitaev model of the $d$-dimensional diamond crystal coincide with the energy functions of the tight binding model of the $d$-dimensional diamond crystal. 
We refer to \cite{Katsunori}, \cite{Keown}, and \cite{Takahashi}  for related works on the $3$-diamond crystal. 

\section{Acknowledgments}
The author would like to express his deepest gratitude to his supervisor Toshitake~Kohno for his invaluable guidance and support throughout the course of this research. 
Thanks are also due to Ayuki~Sekisaka for helpful comments and discussions.


\end{document}